\renewcommand{\arraystretch}{1.2}
\newtheorem{theorem}{Theorem}
\newtheorem{lemma}{Lemma}
\newtheorem{corollary}{Corollary}
\newtheorem{definition}{Definition}
\newtheorem{proposition}{Proposition}
\newcommand{\RNum}[1]{\uppercase\expandafter{\romannumeral #1\relax}}
\newcommand{\gf}{{\mathbb{F}}}
\newcommand{\ls}[1]
    {\dimen0=\fontdimen6\the\font\lineskip=#1\dimen0
     \advance\lineskip.5\fontdimen5\the\font
     \advance\lineskip-\dimen0
     \lineskiplimit=0.9\lineskip
     \baselineskip=\lineskip
     \advance\baselineskip\dimen0
     \normallineskip\lineskip\normallineskiplimit\lineskiplimit
     \normalbaselineskip\baselineskip
     \ignorespaces}
\begin{document}

\bibliographystyle{abbrv}

\title{ In-depth analysis of S-boxes over binary finite fields concerning their differential and Feistel boomerang differential uniformities}

\author{Yuying Man\footnotemark[1]\thanks{Y. Man and X. Zeng are with Hubei Key Laboratory of Applied Mathematics, Faculty of Mathematics and Statistics, Hubei University, Wuhan 430062, China. Email:yuying.man@aliyun, xzeng@hubu.edu.cn},
Sihem Mesnager\footnotemark[2]\thanks{S. Mesnager is with Department of Mathematics, University of Paris 8, F-93526 Saint-Denis, Paris, France, the Laboratory of Analysis, Geometry, and Applications (LAGA), University Sorbonne Paris Nord CNRS, UMR 7539, F-93430, Villetaneuse, France, and also with the T\'{e}l\'{e}com Paris, Palaiseau, France. Email: smesnager@univ-paris8.fr},
Nian Li\footnotemark[3]\thanks{N. Li is with Hubei Key Laboratory of Applied Mathematics, School of Cyber Science and Technology, Hubei University, Wuhan 430062, China. Email: nian.li@hubu.edu.cn},
Xiangyong Zeng \footnotemark[1],
Xiaohu Tang\footnotemark[4]\thanks{X. Tang is with the Information Coding \& Transmission Key Lab of Sichuan Province, CSNMT Int. Coop. Res. Centre (MoST), Southwest Jiaotong University, Chengdu, 610031, China. Email: xhutang@swjtu.edu.cn}}
\date{\today}
\maketitle

\thispagestyle{plain} \setcounter{page}{1}

\begin{abstract}

Substitution boxes (S-boxes) play a significant role in ensuring the resistance of block ciphers against various attacks. The Difference Distribution Table (DDT), the Feistel Boomerang Connectivity Table (FBCT), the Feistel Boomerang Difference Table (FBDT) and the Feistel Boomerang Extended Table (FBET) of a given S-box are crucial tools to analyze its security concerning specific attacks. However, the results on them are rare. In this paper, we investigate the properties of the power function $F(x):=x^{2^{m+1}-1}$ over the finite field $\gf_{2^n}$ of order $2^n$ where $n=2m$ or $n=2m+1$  ($m$ stands for a positive integer). As a consequence, by carrying out certain finer manipulations of solving specific equations over $\gf_{2^n}$, we give explicit values of all entries of the DDT, the FBCT, the FBDT and the FBET of the investigated power functions.  From the theoretical point of view, our study pushes further former investigations on differential and Feistel boomerang differential uniformities for a novel power function $F$. From a cryptographic point of view, when considering  Feistel block cipher involving $F$, our in-depth analysis helps select $F$ resistant to differential attacks, Feistel differential attacks and Feistel boomerang attacks, respectively.

\noindent{\bf Keywords} Difference Distribution Table, Feistel Boomerang Connectivity Table, Feistel Boomerang Difference Table, Feistel Boomerang Extended Table

\noindent{\bf MSC (2020)} 94A60, 11T06

\end{abstract}

\section{Introduction}

Vectorial Boolean functions are widely applied to the design of block and stream ciphers in symmetric cryptography. In that context, they are called \emph{S-boxes}
(substitution boxes). They are essential components of symmetric cryptographic algorithms since S-boxes usually are the only nonlinear elements of these cryptosystems. Differential uniformity measures the resistance of S-boxes against differential attacks. Specifically, an S-box in a block cipher is a vectorial function that takes $n$ binary inputs and whose image is a binary $m$-tuple for some positive integers $n$ and $m$. Generally, we consider these S-boxes over binary finite fields after identifying the vector space $\mathbb{F}_{2}^n$ over  $\mathbb{F}_{2}$ with the finite field  $\mathbb{F}_{2^n}$ of order $2^n$. This allows us to use the rich theory of finite fields and its machinery. S-boxes are usually the only nonlinear elements of the most modern block ciphers. It is, therefore, essential to employ S-boxes with good cryptographic properties to resist various attacks.
An excellent book that covers this topic is \cite{CC-book-2021}. Differential attack, introduced by Biham and Shamir \cite{diff-at} in 1991, is one of the most fundamental cryptanalytic tools to assess the security of block ciphers. The Difference Distribution Table (DDT) and the differential uniformity of S-boxes, introduced by Nyberg \cite{def-DDT} in 1993, can be used to measure the ability of an S-box to resist the differential attack. The smaller the differential uniformity $\delta_F$ of an S-box $F$, the stronger its ability to resist differential attack. Typically, the optimal functions satisfy $\delta_F=2$ and are called almost perfect nonlinear (APN). However, for an even $n$, functions defined over $\mathbb{F}_{2^n}$ with a small differential uniformity are scarce, and only one example of an APN permutation is known for $n=6$~\cite{BROWNING}. In particular, when $n$ is even, a few classes of infinite families of functions have differential uniformity $\delta_F=4$. Some of such functions are power functions of the form $F(x)=x^d$ for some $d$. The DDT of an S-box is essential in estimating a differential characteristic's probability. For an input difference $a\in \gf_{2^n}$ and an output difference $b\in \gf_{2^n}$, it counts the number of cases when the input difference of a pair is $a$, and the output difference is $b$ \cite{diff-at}. The differential spectrum describes the frequencies of the values appearing in the DDT of an S-box. It can provide helpful information for finding the differential path in the differential analysis. As a particular class of functions over finite fields, power functions, namely monomial functions, have been extensively studied in the last decades due to their simple algebraic form and lower implementation cost in a hardware environment. Thus, determining the differential spectra of power functions with low differential uniformity has attracted much attention. However, determining the differential spectra of power functions is relatively challenging. With known differential spectra, only a few power functions over $\gf_{2^n}$ exist. For articles dealing with recent results in this context, the readers can refer to the following list and references therein (\cite{diff-IJIC10}, \cite{diff-IT11}, \cite{diff-DCC14}, \cite{Kim-Mesnager-diff-FFA22},\cite{Kim-Mesnager-diff-FFA23}, \cite{diff-IT23}, \cite{diff-FFA17}, \cite{diff-DCC18}). It is worth noting that determining the DDT of a power function is much more complex than finding its differential spectrum. To our best knowledge, there are only three classes of power functions over $\gf_{2^n}$ with known DDT (see Table \ref{table-1}).

\begin{table}
\caption{Power mappings $F(x)=x^d$ over $\gf_{2^n}$ with known DDT}\label{table-1}
\renewcommand\arraystretch{1.2}
\setlength\tabcolsep{10pt}
\centering
\begin{tabular}{cccc}
		\toprule
      $d$ & Condition & differential uniformity &  Ref. \\
       \midrule
        $2^n-2$ & $n$ odd or $n$ even & 2 or 4  &  \cite{ex-FBCT} \\
        $2^k+1$ & ${\rm gcd}(n,k)=d$ & $2^d$ & \cite{ex-FBCT}\\
       $2^{3k}+2^{2k}+2^k-1$ & $n=4k$ & $2^{2k}$ & \cite{ex-FBCT, diff-IWSDA} \\
       $2^{m+1}-1$ & $n=2m+1$ or $n=2m$ & 2 or $2^m$ & This paper \\
\bottomrule
\end{tabular}
\end{table}

Boomerang attack is another crucial cryptanalytical technique on block cyphers, introduced by Wagner \cite{BA} in 1999, a differential cryptanalysis variant. To analyze the boomerang attack of block cyphers in a better way, analogous to the DDT concerning the differential attack, in Eurocrypt 2018, Cid, Huang, Peyrin, Sasaki and Song in \cite{BCT} introduced a new tool known as Boomerang Connectivity Table (BCT) to measure the resistance of an S-box against boomerang attacks. The BCT tool allows us to easily evaluate the probability of the boomerang switches when it covers one round. Small entries in the BCT of a cipher prevent it from attacks related to the boomerang cryptanalysis. It is well-known that a function's set of possible BCTs is preserved under the inversion and the affine equivalence~\cite{BCT-u}. Besides, a function $F$ with optimal behaviours concerning differential attacks is also optimal concerning boomerang attacks (see \cite{BCT}). When studying how to extend the BCT theory to boomerang switches on more rounds, Wang and Peyrin \cite{BDT} proposed the concept of the Boomerang Difference Table (BDT), a variant of the BCT with one supplementary variable fixed. However, these works only addressed the case of the Substitution-Permutation Network (SPN) structure and completely neglected the ciphers following a Feistel Network structure. At the same time, it cannot be denied that it is an equally important type of block cipher design, instantiated by the widely used 3-DES and by CLEFIA \cite{Feistel}. To address this deficiency, Boukerrou, Huynh, Lallemand, Mandal, and Minier \cite{def-FBCT} introduced the notion of Feistel Boomerang Connectivity Table (FBCT, the Feistel counterpart of the BCT) that covers switches of one round and then introduced the Feistel Boomerang Difference Table (FBDT, the Feistel counterpart of the BDT) to deal with a 2-round switch and finally proposed a new tool called Feistel Boomerang Extended Table (FBET) that treats the case of an arbitrary number of rounds. FBCT, FBDT and FBET are crucial tables for the analysis of the resistance of block ciphers to powerful attacks such as differential and boomerang attacks. The computation of these explicit values of all entries and the cardinalities of their corresponding sets of such values in each table aimed to facilitate the analysis of differential and boomerang cryptanalysis of S-boxes when studying distinguishers and trails \cite{sur}. The explicit values of the entries of these tables and their cardinalities are crucial tools to test the resistance of block ciphers based on variants of the function against cryptanalytics such as differential and boomerang attacks. However, obtaining these entries and the cardinalities in each table for a given S-box is challenging. To our best knowledge, there are only three classes of functions with known FBCT (see Table \ref{table-2}) and two classes of functions with known FBDT (see Table \ref{table-3}).

\begin{table}
\caption{Power mappings $F(x)=x^d$ over $\gf_{2^n}$ with known FBCT}\label{table-2}
\renewcommand\arraystretch{1.2}
\setlength\tabcolsep{10pt}
\centering
\begin{tabular}{cccc}
		\toprule
      $d$ & Condition & Feistel boomerang uniformity &  Ref. \\
       \midrule
        $2^n-2$ & $n$ odd or $n$ even & 0 or 4  &  \cite{ex-FBCT} \\
        $2^k+1$ & ${\rm gcd}(n,k)=d$ & 0 & \cite{ex-FBCT}\\
       $2^{2k}+2^k+1$ & $n=4k$ & $2^{2k}$ & \cite{ex-FBCT} \\
       $2^{m+1}-1$ & $n=2m+1$ or $n=2m$ & 0 or $2^m$ & This paper \\
\bottomrule
\end{tabular}
\end{table}

\begin{table}
\caption{Power mappings $F(x)=x^d$ over $\gf_{2^n}$ with known FBDT}\label{table-3}
\renewcommand\arraystretch{1.2}
\setlength\tabcolsep{4.5pt}
\centering
\begin{tabular}{cccc}
		\toprule
      $d$ & Condition & Feistel boomerang differential uniformity &  Ref. \\
       \midrule
        $2^n-2$ & $n$ odd or $n$ even & 2 or 4  &  \cite{ex-FBCT} \\
        $2^k+1$ & ${\rm gcd}(n,k)=d$ & $2^d$ & \cite{ex-FBCT}\\
       $2^{m+1}-1$ & $n=2m+1$ or $n=2m$ & 2 or $2^m$ & This paper \\
\bottomrule
\end{tabular}
\end{table}

In this paper, we investigate the specific tables of the power mapping $F(x)=x^{2^{m+1}-1}$  by carrying out some finer manipulations of solving certain equations over $\mathbb{F}_{2^n}$, where $n=2m$ or $n=2m+1$.
Firstly, utilizing the solvability of the quadratic equations and the balance property of the trace function, we compute the values of all entries in the DDT of $F(x)$ and then characterize the differential properties of $F(x)$ by using its DDT.
Secondly, using the technique of reducing the high-degree equations to the low-degree equations and the iterative method, we determine all entries of the FBCT of $F(x)$ entirely and give an accurate value of the number of elements $(a,b)\in \gf_{2^n}^2$ in the spectrum of the FBCT of $F(x)$ (see Definition \ref{definition-FBCT}).
Based on the DDT and FBCT of $F(x)$, we then compute all values of the entries in its FBDT and count the number of elements $(a,c,b)\in \gf_{2^n}^3$ in a given spectrum (see Definition \ref{definition-FBDT}).
Finally, based on the techniques for solving equations over finite fields and the related results of the DDT, FBCT and FBDT of $F(x)$, we provide all explicit values of all entries of the FBET and calculate the number of elements $(a,c,b,d)\in \gf_{2^n}^4$ taking any value in the FBET (see Definition \ref{definition-FBET}).
It is worth mentioning that the power mapping $F(x)=x^{2^{m+1}-1}$ over $\mathbb{F}_{2^n}$ in this paper is the fourth, the third, and the first class of functions with known FBCT, FBDT and FBET, respectively (see Table \ref{table-2}, Table \ref{table-3} and Table \ref{table-4}).

\begin{table}
\caption{Power mapping $F(x)=x^d$ over $\gf_{2^n}$ with known FBET}\label{table-4}
\renewcommand\arraystretch{1.2}
\setlength\tabcolsep{4.5pt}
\centering
\begin{tabular}{cccc}
		\toprule
      $d$ & Condition & Feistel boomerang extended uniformity &  Ref. \\
       \midrule
       $2^{m+1}-1$ & $n=2m+1$ or $n=2m$ & 2 or $2^m$ & This paper \\
\bottomrule
\end{tabular}
\end{table}

The remainder of this paper is organized as follows. In Section \ref{pre}, we first fix our notation and present some
basic notions and a few known results helpful in the technical part of the paper. Next, Section \ref{DDT-result1} provides explicit values of all entries in the DDT of $F(x)$. Section \ref{FBCT-result1}, Section \ref{FBDT-result1} and Section \ref{FBET-result1} offer a detailed study of the FBCT, FBDT and FBET of $F(x)$, respectively. Section \ref{con-remarks} concludes this paper.

\section{Preliminaries}\label{pre}

Throughout this paper, $\mathbb{F}_{2^n}$ denotes the finite field of order $2^n$ and   ${\rm Tr}_{1}^n(\cdot)$ denotes the  (absolute) trace function from $\mathbb{F}_{2^n}$ onto its prime field $\mathbb{F}_{2}$ (where $n$ is a positive integer). Recall that  for $x\in\mathbb{F}_{2^n}$,
${\rm Tr}_{1}^n(x)=\sum_{i=0}^{n-1}x^{2^i}$. For any (finite) set $E$, $|E|$ denotes its cardinality.

In this section,  we recall some basic definitions and present some results which will be used frequently in this paper.

 \begin{definition}\label{definition-DDT}\rm (\cite{def-DDT})
 Let $F(x)$ be a  mapping from $\mathbb{F}_{2^n}$ to itself. The Difference Distribution Table (DDT) of $F(x)$ is a $2^n \times 2^n$ table where the entry at $(a,b)\in \mathbb{F}_{2^n}^2$ is defined by
$${\rm DDT}_{F}(a,b)=|\{x \in \gf_{2^n}: F(x)+F(x+a)=b \}|.$$

The mapping $F(x)$ is said to be \emph{differentially $\delta$-uniform} if $\delta(F)=\delta$ \cite{def-DDT}, and accordingly $\delta(F)$ is called the \emph{differential uniformity} of $F(x)$, where
$$
\delta(F)=\max _{a,b \in \mathbb{F}_{2^n},a\ne 0} {\rm DDT}_{F}(a,b).
$$
When $F$ is used as an S-box inside a cryptosystem, the smaller the value $\delta(F)$ is, the better the contribution of $F$ to the resistance against differential attack.

Assume that a mapping $F(x)$ over $\gf_{2^n}$ has differential uniformity $\delta$  and denote
\[\omega_i=|\left\{a\in \gf_{2^n}^*, b\in \gf_{2^n}\mid {\rm DDT}_{F}(a,b)=i\right\}|,\,\,0\leq i\leq \delta.\]
The differential spectrum of $F$ is defined to be
an ordered sequence
\[
\mathbb{S} = [\omega_0, \omega_1, \ldots, \omega_{\delta}].
\]

According to the definition of DDT, we have the following identities
\begin{equation}\label{prop}
\sum\limits_{i=0}^\delta \omega_i=2^n(2^n-1)\,\,{\rm and}\,\,\sum\limits_{i=0}^\delta \left(i\times \omega_i\right)=2^n(2^n-1).
\end{equation}
\end{definition}

Let $F: \mathbb{F}_2^n\rightarrow \mathbb{F}_2^n$ be a permutation over $\mathbb{F}_{2^n}$.
The boomerang Connectivity Table (BCT) helps measure the resistance of an S-box $F$  against boomerang attacks.  It is defined more specifically as follows by the  $2^n\times 2^n$ table defined for $a,b\in \mathbb{F}_{2^n}$ by
$$
{\rm BCT}_F(a,b)=|\{x\in \mathbb{F}_{2^n} : F^{-1}(F(x)+b)+F^{-1}(F(x+a)+b)=a\}|.
$$

The boomerang uniformity~(see. \cite{BCT-u}) of $F$ is defined by
$$
\beta_F=\max_{a,b\in \mathbb{F}_{2^n},ab\neq 0}{\rm BCT}_F(a,b).
$$

To consider the case of ciphers following a Feistel Network structure, some variants of the DDT and the BCT for ciphers were presented by Boukerrou et al. \cite{def-FBCT}. They defined the FBCT and the Feistel
boomerang uniformity of $F(x)$ as follows.

\begin{definition}\label{definition-FBCT}\rm (\cite{def-FBCT})
Let $F(x)$ be a mapping from $\mathbb{F}_{2^n}$ to itself. The Feistel Boomerang Connectivity Table (FBCT) is a $2^n\times 2^n$ table defined for $(a,b)\in \mathbb{F}_{2^n}^2$ by
$${\rm FBCT}_{F}(a,b)=|\{x \in \gf_{2^n}: F(x)+F(x+a)+F(x+b)+F(x+a+b)=0 \}|.$$
Clearly, the FBCT satisfies ${\rm FBCT}_{F}(a,b)=2^n$ if $ab(a+b)=0$. Hence, the Feistel boomerang uniformity of $F(x)$ is defined by
$$
\beta_c(F)=\max _{a, b \in \mathbb{F}_{2^n}, ab(a+b)\ne 0} {\rm FBCT}_{F}(a,b).
$$
\end{definition}

Boukerrou et al. \cite{def-FBCT} verified that the counterpart of BDT for the Feistel case is useful for studying a switch over two rounds. Following the idea of BDT, they defined the Feistel Boomerang Difference Table.

The basic properties of the FBCT are studied in~\cite{def-FBCT}. Typically, the FBCT satisfies the following properties.
\begin{itemize}
\item Symmetry: ${\rm FBCT}_F(a,b)={\rm FBCT}_F(b,a)$ for all $a,b\in \mathbb{F}_{2^n}$.
\item Multiplicity: ${\rm FBCT}_F(a,b)\equiv 0\pmod 4$ for all $a,b\in \mathbb{F}_{2^n}$.
\item First line: ${\rm FBCT}_F(0,b)=2^n$ for all $b\in \mathbb{F}_{2^n}$.
\item First column: ${\rm FBCT}_F(a,0)=2^n$ for all $a\in \mathbb{F}_{2^n}$.
\item Diagonal: ${\rm FBCT}_F(a,a)=2^n$ for all $a\in \mathbb{F}_{2^n}$.
\item Equalities: ${\rm FBCT}_F(a,a)={\rm FBCT}_F(a,a+b)$ for all $a,b\in \mathbb{F}_{2^n}$.
\end{itemize}
Another important property of the FBCT is that $F$ is an APN (almost perfect nonlinear) function if and only if ${\rm FBCT}_F(a,b)=0$ for $a,b\in \mathbb{F}_{2^n}$ with $ab(a+b)\neq 0$.
Also, it is important to notice that, generally, $F$ and $F^{-1}$ do not have the same FBCT.

\begin{definition}\label{definition-FBDT}\rm (\cite{def-FBCT})
Let $F(x)$ be a mapping from $\mathbb{F}_{2^n}$ to itself. The Feistel Boomerang Difference Table (FBDT) is a $2^n\times 2^n\times 2^n$ table defined for $(a,c,b)\in \mathbb{F}_{2^n}^3$ by
\begin{equation*}
\begin{aligned}
{\rm FBDT}_{F}(a,c,b)=|\{x \in \gf_{2^n}: &F(x)+F(x+a)+F(x+b)+F(x+a+b)=0,\\
&F(x)+F(x+a)=c \}|.
\end{aligned}
\end{equation*}
It can be seen that ${\rm FBDT}_{F}(a,c,b)=2^n$ if and only if $a=c=0$. Thus, Eddahmani  and  Mesnager (\cite{ex-FBCT}) introduced the Feistel boomerang differential uniformity of $F(x)$ as
$$
\beta_d(F)=\max _{(a, c, b) \in \mathbb{F}_{2^n}^3, (a,c)\ne (0,0)} {\rm FBDT}_{F}(a,c, b).
$$
\end{definition}

Boukerrou et al. \cite{def-FBCT} also proposed the FBET that treats the case of an arbitrary number of rounds. They defined the Feistel Boomerang Extended Table as below.

\begin{definition}\label{definition-FBET}\rm (\cite{def-FBCT})
Let $F(x)$ be a mapping from $\mathbb{F}_{2^n}$ to itself. The Feistel Boomerang Extended Table (FBET) is a $2^n\times 2^n\times 2^n\times 2^n$ table defined for $(a,c,b,d)\in \mathbb{F}_{2^n}^4$ by
\begin{equation*}
\begin{aligned}
{\rm FBET}_{F}(a,c,b,d)=|\{x \in \gf_{2^n}: &F(x)+F(x+a)+F(x+b)+F(x+a+b)=0,\\
&F(x)+F(x+a)=c, F(x+a)+F(x+a+b)=d \}|.
\end{aligned}
\end{equation*}
 It can be seen that ${\rm FBET}_{F}(a,c,b,d)=2^n$ only if $a=c=b=d=0$. This motivates us to introduce the Feistel boomerang extended uniformity $\beta_e(F)$ of $F(x)$ as
$$
\beta_e(F)=\max _{(a, c, b, d) \in \mathbb{F}_{2^n}^4\setminus \{(0,0,0,0)\}} {\rm FBET}_{F}(a,c,b,d).
$$
\end{definition}

The following lemma will be used frequently in this paper.

\begin{lemma}\label{lemma1-root}\rm (\cite{qua-root})
Let $a, b, c \in \mathbb{F}_{2^n}$, $a\ne 0$ and $F(x)=ax^2+bx+c$. Then
\begin{itemize}
\item [\rm (i)] $F(x)$ has exactly one root in $\mathbb{F}_{2^n}$ if and only if $b=0$;
\item [\rm (ii)] $F(x)$ has exactly two roots in $\mathbb{F}_{2^n}$ if and only if $b\ne 0$ and ${\rm Tr}_{1}^n(\frac{ac}{b^2})=0$,
\item [\rm (iii)] $F(x)$ has no root in $\mathbb{F}_{2^n}$ if and only if $b\ne 0$ and ${\rm Tr}_{1}^n(\frac{ac}{b^2})=1$.
\end{itemize}
\end{lemma}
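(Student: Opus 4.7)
The plan is to split on whether $b=0$ or $b \ne 0$, since the structure of the equation changes qualitatively. When $b=0$, the equation reads $ax^2 = c$, and because the Frobenius map $x \mapsto x^2$ is a bijection on $\mathbb{F}_{2^n}$, there is a single solution, namely $x = (c/a)^{2^{n-1}}$. This settles part (i).

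For $b \ne 0$, I would normalize by the substitution $x = (b/a)\, y$, turning $ax^2 + bx + c$ into $(b^2/a)(y^2 + y + ac/b^2)$. Thus $F(x)=0$ is equivalent to the Artin--Schreier-type equation
\[
y^2 + y = \tfrac{ac}{b^2}.
\]
The problem now reduces to characterizing when $y^2+y=\alpha$ has solutions in $\mathbb{F}_{2^n}$.

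The key step is to analyze the $\mathbb{F}_2$-linear map $\phi: \mathbb{F}_{2^n} \to \mathbb{F}_{2^n}$, $y \mapsto y^2+y$. Its kernel is $\{0,1\}=\mathbb{F}_2$, so by rank-nullity its image has cardinality $2^{n-1}$. On the other hand, using that $\Tr_1^n$ is $\mathbb{F}_2$-linear and Frobenius-invariant, for every $y$ one has $\Tr_1^n(y^2+y)=\Tr_1^n(y)^2+\Tr_1^n(y)=0$, so $\mathrm{Im}(\phi)$ sits inside the kernel of trace, which also has cardinality $2^{n-1}$. Hence $\mathrm{Im}(\phi)=\{\alpha \in \mathbb{F}_{2^n}: \Tr_1^n(\alpha)=0\}$. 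It follows that $y^2+y=\alpha$ is solvable iff $\Tr_1^n(\alpha)=0$, in which case there are exactly two solutions (differing by an element of $\ker \phi =\{0,1\}$). Plugging $\alpha = ac/b^2$ and noting that the substitution $x \mapsto (b/a)y$ is a bijection gives parts (ii) and (iii) simultaneously.

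The argument involves no real obstacle, since this is a classical fact; the only point requiring slight care is verifying that the image of $\phi$ is exactly $\ker \Tr_1^n$, which relies on the dimension comparison above together with the Frobenius-invariance identity $\Tr_1^n(y^2)=\Tr_1^n(y)$. Once that is in place, the dichotomy between $b=0$ and $b\ne 0$, and the further dichotomy within $b\ne 0$ according to $\Tr_1^n(ac/b^2)$, yields the three stated conclusions.
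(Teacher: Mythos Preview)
Your proof is correct and is the standard argument for this classical fact. Note that the paper itself does not prove this lemma but simply quotes it from the cited reference \cite{qua-root}, so there is no ``paper's own proof'' to compare against; your write-up would serve as a complete self-contained justification.
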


\section{The Difference Distribution Table of $x^{2^{m+1}-1}$}\label{DDT-result1}

This section gives the explicit values of ${\rm DDT}_{F}(a,b)$ of the power mapping $F(x)=x^{2^{m+1}-1}$ over $\mathbb{F}_{2^n}$ for all $a,b \in \mathbb{F}_{2^n}$. The first main result is given by the following theorem, which is derived through the computation of the number of solutions over $\mathbb{F}_{2^n}$ of the equation  $x^{2^{m+1}-1}+(x+a)^{2^{m+1}-1}=b$.

\begin{theorem}\label{the-DDT}
Let $F(x)=x^{2^{m+1}-1}$ be a power mapping over $\mathbb{F}_{2^n}$. Then
\begin{itemize}
\item [\rm (i)] if $a=0$ and $b=0$, then ${\rm DDT}_{F}(a,b)=2^n$;
\item [\rm (ii)] if $a=0$ and $b\ne 0$, then ${\rm DDT}_{F}(a,b)=0$;
\item [\rm (iii)] if $a\ne 0$ and $b=0$, then
$${\rm DDT}_{F}(a,b)=0$$ for $n=2m+1$; and for $n=2m$,
$${\rm DDT}_{F}(a,b)=
\begin{cases}
    2, &  {\rm if}\,\, $m$\,\, {\rm is\,\, odd}; \\
     0, &  {\rm if}\,\,  $m$\,\, {\rm is\,\, even}. \\
\end{cases}
$$
\item [\rm (iv)] if $a\ne 0$, $b\ne 0$, let $t=\frac{b}{a^{2^{m+1}-1}}$, then
$${\rm DDT}_{F}(a,b)=
\begin{cases}
    2, &  {\rm if}\,\, {\rm Tr}_{1}^n(\frac{1}{t^{2^m+1}})=1; \\
     0, &  {\rm if}\,\,  {\rm Tr}_{1}^n(\frac{1}{t^{2^m+1}})=0 \\
\end{cases}
$$
for $n=2m+1$; and for $n=2m$,
$${\rm DDT}_{F}(a,b)=
\begin{cases}
    2^m, &  {\rm if}\,\,  t=1; \\
     2, &  {\rm if}\,\,  t^{2^m+1}\ne 1\,\, {\rm and}\,\, {\rm Tr}_{1}^m(\frac{t^{2^m}+t+1}{t^{2^{m+1}+2}+1})=1;\\
     0, &  {\rm if}\,\, t\ne 1\,\, {\rm and}\,\, t^{2^m+1}=1\, \\ &{\rm or}\,\, t^{2^m+1}\ne 1\,\, {\rm and}\,\, {\rm Tr}_{1}^m(\frac{t^{2^m}+t+1}{t^{2^{m+1}+2}+1})=0. \\

\end{cases}
$$
\end{itemize}
\end{theorem}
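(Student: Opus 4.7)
Statements (i) and (ii) are immediate from the definition. For (iii), the substitution $x=ay$ reduces the problem to counting $y\in\mathbb{F}_{2^n}$ with $y^{2^{m+1}-1}+(y+1)^{2^{m+1}-1}=0$; for $y\notin\{0,1\}$, multiplying by $y(y+1)$ collapses this to $y^{2^{m+1}}+y=0$, so $y\in\mathbb{F}_{2^{\gcd(m+1,n)}}$. The three sub-claims then follow from $\gcd(m+1,2m+1)=1$, $\gcd(m+1,2m)=2$ for odd $m$, and $\gcd(m+1,2m)=1$ for even $m$, together with the observation that $y\in\{0,1\}$ never lies in the solution set when $b=0$ and $a\neq 0$.

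The heart of the proof is case (iv). After substituting $y=x/a$ and $t=b/a^{2^{m+1}-1}$, one must count $y\in\mathbb{F}_{2^n}$ with $y^{2^{m+1}-1}+(y+1)^{2^{m+1}-1}=t$; the endpoints $y\in\{0,1\}$ contribute only when $t=1$, while for $y\notin\{0,1\}$ multiplication by $y(y+1)$ yields the linearized equation
\[
L_t(y):=y^{2^{m+1}}+ty^2+(t+1)y=0.
\]
The plan is to introduce the Artin--Schreier variable $z=y^2+y$ and use the telescoping identity $y^{2^m}+y=\sum_{k=0}^{m-1}z^{2^k}$ to show that $L_t(y)=0$ is equivalent to $T_{m+1}(z)=tz$, where $T_{m+1}(z):=\sum_{k=0}^{m}z^{2^k}$. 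Since each $z$ with $\mathrm{Tr}_1^n(z)=0$ lifts to exactly two values of $y$, the count becomes $2\,|V_t\cap\ker\mathrm{Tr}_1^n|$ with $V_t:=\{z:T_{m+1}(z)=tz\}$. The sub-case $t=1$ collapses directly to $y^{2^{m+1}-2}=1$, whose count $\gcd(2^{m+1}-2,2^n-1)+1$ equals $2$ for $n=2m+1$ and $2^m$ for $n=2m$.

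For $n=2m+1$ and $t\notin\{0,1\}$, I would apply $\phi^{m+1}$ to $T_{m+1}(z)=tz$ and use $z^{2^n}=z$ to group the resulting sum into the trace, obtaining $t^{2^{m+1}}z^{2^{m+1}}=(1+t)z+\mathrm{Tr}_1^n(z)$. Combining with $\mathrm{Tr}_1^n(z)=0$, and using $\gcd(2^{m+1}-1,2^n-1)=1$, this pins down a unique nonzero candidate $z_0=(1+t)^{2^{m+1}+1}/t^{2^{m+1}+2}$; a direct trace computation confirms $\mathrm{Tr}_1^n(z_0)=0$ automatically. The universal identity $T_{m+1}(z)^2+T_{m+1}(z)=z^{2^{m+1}}+z$ together with $z_0^{2^{m+1}}=(1+t)z_0/t^{2^{m+1}}$ shows that $\Phi:=T_{m+1}(z_0)+tz_0$ satisfies $\Phi^2+\Phi=0$; hence $\Phi\in\mathbb{F}_2$. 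Since $\mathrm{Tr}_1^n(1)=1$ for odd $n$, we get $\Phi=\mathrm{Tr}_1^n(\Phi)=\mathrm{Tr}_1^n(tz_0)$, and expanding $tz_0=(1+1/t)^{2^{m+1}+1}$ together with $\mathrm{Tr}_1^n(s^{2^{m+1}+1})=\mathrm{Tr}_1^n(s^{2^m+1})$ (for $s=1/t$ and $n=2m+1$) gives $\Phi=1+\mathrm{Tr}_1^n(1/t^{2^m+1})$. Therefore $z_0\in V_t$ iff $\mathrm{Tr}_1^n(1/t^{2^m+1})=1$, which matches the claim.

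For $n=2m$ the same framework applies but demands more delicate bookkeeping. Applying $\phi^m$ to $T_{m+1}(z)=tz$ under $z^{2^n}=z$ and imposing $\mathrm{Tr}_1^n(z)=0$ forces $((t+1)z)^{2^m-1}=1$, i.e., $(t+1)z\in\mathbb{F}_{2^m}^*$ rather than a unique element. When $t^{2^m+1}=1$, substituting such a $z$ back into $T_{m+1}(z)=tz$ is inconsistent unless $z=0$, yielding DDT$=0$. When $t^{2^m+1}\neq 1$ the candidate $z_0=(1+t)^{2^{m+1}+1}/(t^{2^{m+1}+2}+1)$ is still well-defined and satisfies a $\Phi^2+\Phi=0$ relation by the same universal identity; however, because $\mathrm{Tr}_1^n(1)=0$ for even $n$, the odd-case trace argument fails and must be routed through the subfield using $\mathrm{Tr}_1^{2m}=\mathrm{Tr}_1^m\circ\mathrm{Tr}_m^{2m}$ together with $(t+1)z_0\in\mathbb{F}_{2^m}$, which converts the problem to a $\mathbb{F}_{2^m}$-quadratic and, via Lemma~\ref{lemma1-root} applied over $\mathbb{F}_{2^m}$, yields the stated condition $\mathrm{Tr}_1^m\bigl((t^{2^m}+t+1)/(t^{2^{m+1}+2}+1)\bigr)=1$. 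I expect this last step to be the main obstacle: the non-coprimality of $2^{m+1}-1$ and $2^n-1$ defeats the slick uniqueness argument from the odd case, so one must carefully exploit the norm $t\mapsto t^{2^m+1}$ and the subfield structure to localize the trace condition.
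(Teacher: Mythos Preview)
Your approach is essentially the same as the paper's, repackaged through the Artin--Schreier substitution $z=y^2+y$. The paper never names $z$ but arrives at exactly the same objects: after multiplying by $y(y+1)$ to get the linearized equation $y^{2^{m+1}}+ty^2+(t+1)y=0$, it raises this to the $2^{m+1}$-th power, substitutes back, and factors out $y(y+1)$ to obtain a quadratic $y^2+y+C=0$ whose constant term $C$ is precisely your candidate $z_0$. Your step ``apply $\phi^{m+1}$ and use $z^{2^n}=z$'' is the same Frobenius iteration, and your verification that $\Phi:=T_{m+1}(z_0)+tz_0$ lies in $\mathbb{F}_2$ via the identity $T_{m+1}(z)^2+T_{m+1}(z)=z^{2^{m+1}}+z$ is a clean reformulation of the paper's direct telescoping computation of $y_0^{2^{m+1}}+y_0=\sum_{i=0}^m C^{2^i}$. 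The odd-$n$ case in your write-up is complete and correct, and the trick $\Phi=\mathrm{Tr}_1^n(\Phi)$ (using $\mathrm{Tr}_1^n(1)=1$) is a pleasant shortcut the paper does not isolate.

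For $n=2m$ your sketch is on the right track but is genuinely incomplete at the last step. The paper does not route the problem through ``Lemma~\ref{lemma1-root} over $\mathbb{F}_{2^m}$''; instead it computes $(y_0^{2^{m+1}}+y_0)^{2^m}$ directly as a telescoping sum and, after several algebraic regroupings using $t^{2^m+1}\in\mathbb{F}_{2^m}$, extracts the term $\mathrm{Tr}_1^m\bigl((t^{2^m}+t+1)/(t^{2^{m+1}+2}+1)\bigr)$ explicitly. Your plan to ``use $(t+1)z_0\in\mathbb{F}_{2^m}$ and $\mathrm{Tr}_1^{2m}=\mathrm{Tr}_1^m\circ\mathrm{Tr}_m^{2m}$'' would work, but note that the shortcut $\Phi=\mathrm{Tr}_1^n(\Phi)$ fails here since $\mathrm{Tr}_1^n(1)=0$; you will need to compute $T_{m+1}(z_0)$ (equivalently $\sum_{i=0}^m z_0^{2^i}$) term by term, exactly as the paper does, and the $\mathrm{Tr}_1^m$ condition emerges from the pieces of $z_0$ that already lie in $\mathbb{F}_{2^m}$ after using $(t^{2^m+1}+1)\in\mathbb{F}_{2^m}$. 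That computation, not any conceptual obstacle, is the actual work in the even case.
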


\begin{proof}
To determine the DDT of $F(x)$, it suffices to consider the number of solutions of
\begin{equation}\label{DDT-1}
F(x)+F(x+a)=x^{2^{m+1}-1}+(x+a)^{2^{m+1}-1}=b
\end{equation}
for any $a,b \in \mathbb{F}_{2^n}$. We divide the discussion into the following three cases.

 {\textbf{Case 1:}} It is seen that ${\rm DDT}_{F}(0,0)=2^n$ and ${\rm DDT}_{F}(0,b)=0$ for $b\ne 0$.

 {\textbf{Case 2:}} If $a\ne 0$ and $b=0$, (\ref{DDT-1}) can be reduced to
 \begin{equation}\label{DDT-b=0}
x^{2^{m+1}-1}=(x+a)^{2^{m+1}-1}.
\end{equation}
When $n=2m+1$, (\ref{DDT-b=0}) has no solution due to ${\rm gcd}(2^{m+1}-1,2^n-1)=1$.
When $n=2m$, due to
$${\rm gcd}(2^{m+1}-1,2^n-1)=
\begin{cases}
    3, &  {\rm if}\,\,  m\,\, {\rm is\,\, odd}; \\
     1, &  {\rm if}\,\,  m\,\, {\rm is\,\, even},
\end{cases}
$$
we have that (\ref{DDT-b=0}) has no solution and two solutions for even $m$ and odd $m$, respectively.

{\textbf{Case 3:}} In this case, we discuss the case $a\ne 0$ and $b\ne 0$. Let $y=\frac{x}{a}$, and $t=\frac{b}{a^{2^{m+1}-1}}$ where $t\ne 0$. Then (\ref{DDT-1}) can be reduced to $y^{2^{m+1}-1}+(y+1)^{2^{m+1}-1}=t$. Note that $y=0$ or $y=1$ is a solution of (\ref{DDT-1}) when $t=1$. Now we assume that $y\ne 0,1$.
Multiplying $y(y+1)$ on both sides of the above equation gives
 \begin{equation}\label{DDT-n=2m+1-1}
y^{2^{m+1}}+ty^2+(t+1)y=0.
\end{equation}

Next, we have to distinguish the following two subcases.

{\textbf{Subcase 1:}} $n=2m+1$.
When $t=1$, we derive from (\ref{DDT-n=2m+1-1}) that $y^{2^{m+1}}=y^2$, which implies $y\in \gf_{2^m}$. Since ${\rm gcd}(m,2m+1)=1$, then we have $y\in \gf_{2}$. This indicates ${\rm DDT}_{F}(a,b)=2$ for $t=1$. Now we suppose that $t\ne 0,1$. From (\ref{DDT-n=2m+1-1}), we obtain
\begin{equation}\label{DDT-n=2m+1-2}
\begin{aligned}
y^2&=(y^{2^{m+1}})^{2^{m+1}}\\
&=t^{2^{m+1}}y^{2^{m+2}}+(t^{2^{m+1}}+1)y^{2^{m+1}}\\
&=t^{2^{m+1}}(ty^2+(t+1)y)^2+(t^{2^{m+1}}+1)(ty^2+(t+1)y)\\
&=t^{2^{m+1}+2}y^4+(t^{{2^{m+1}}+2}+t^{2^{m+1}+1}+t^{2^{m+1}}+t)y^2+(t+1)(t^{2^{m+1}}+1)y.
\end{aligned}
\end{equation}
A straightforward calculation gives
$$
y(y+1)(t^{{2^{m+1}}+2}y^2+t^{{2^{m+1}}+2}y+(t^{2^{m+1}}+1)(t+1))=0.
$$
Since $y\ne 0,1$ and $t^{{2^{m+1}}+2}\ne 0$, then we have
\begin{equation}\label{DDT-n=2m+1-6}
y^2+y+\frac{(t^{2^{m+1}}+1)(t+1)}{t^{{2^{m+1}}+2}}=0.
\end{equation}
Thanks to
\begin{equation*}
\begin{aligned}
{\rm Tr}_{1}^n\Big (\frac{(t^{2^{m+1}}+1)(t+1)}{t^{{2^{m+1}}+2}}\Big)&={\rm Tr}_{1}^n\Big(\frac{1}{t}+\frac{1}{t^{2^{m+1}+1}}+\frac{1}{t^2}+\frac{1}{t^{2^{m+1}+2}}\Big)\\
&={\rm Tr}_{1}^n\Big(\frac{1}{t}\Big)+{\rm Tr}_{1}^n\Big(\frac{1}{t^{2^{m+1}+1}}\Big)+{\rm Tr}_{1}^n\Big(\frac{1}{t}\Big)^2+{\rm Tr}_{1}^n\Big(\frac{1}{t^{2^{m+1}+1}}\Big)^{2^m}\\
&=0
\end{aligned}
\end{equation*}
and Lemma \ref{lemma1-root}, (\ref{DDT-n=2m+1-6}) has two solutions.  We assume that $y_0$ is a solution of (\ref{DDT-n=2m+1-6}). Next, we determine whether $y_0$ is the solution of (\ref{DDT-n=2m+1-1}).
Combining (\ref{DDT-n=2m+1-1}) with (\ref{DDT-n=2m+1-6}), we get
$$
y_0^{2^{m+1}}+y_0+\frac{(t^{2^{m+1}}+1)(t+1)}{t^{{2^{m+1}}+1}}=0.
$$
On the other hand,
\begin{equation*}
\begin{aligned}
y_0^{2^{m+1}}+y_0&=\sum\limits_{i=0}^{m} (y_0^2+y_0)^{2^i}
=\sum\limits_{i=0}^{m} \Big(\frac{(t^{2^{m+1}}+1)(t+1)}{t^{{2^{m+1}}+2}}\Big)^{2^i}\\
&=\sum\limits_{i=0}^{m} \Big(\frac{1}{t}+\frac{1}{t^{2^{m+1}+1}}+\frac{1}{t^2}+\frac{1}{t^{2^{m+1}+2}}\Big)^{2^i}\\
&=\sum\limits_{i=0}^{m} \Big(\frac{1}{t}+(\frac{1}{t^{2^{m+1}+2}})^{2^m}+(\frac{1}{t})^2+\frac{1}{t^{2^{m+1}+2}}\Big)^{2^i}\\
&={\rm Tr}_{1}^n\Big(\frac{1}{t^{2^m+1}}\Big)^2+\frac{1}{t^{2^{m+1}+1}}+\frac{1}{t}+\frac{1}{t^{2^{m+1}}}\\
&={\rm Tr}_{1}^n\Big(\frac{1}{t^{2^m+1}}\Big)+1+\frac{(t^{2^{m+1}}+1)(t+1)}{t^{{2^{m+1}}+1}}.
\end{aligned}
\end{equation*}
Thus the solutions of (\ref{DDT-n=2m+1-6}) satisfy (\ref{DDT-n=2m+1-1}) if and only if ${\rm Tr}_{1}^n(\frac{1}{t^{2^m+1}})=1$. This implies that the number of the solutions of (\ref{DDT-n=2m+1-1}) is 2 (resp. 0) if and only if ${\rm Tr}_{1}^n(\frac{1}{t^{2^m+1}})=1$ (resp. ${\rm Tr}_{1}^n(\frac{1}{t^{2^m+1}})=0$ ) in this case, and the number of $t$ such that ${\rm Tr}_{1}^n(\frac{1}{t^{2^m+1}})=1$ is $2^{n-1}-1$.

{\textbf{Subcase 2:}} $n=2m$. When $t=1$, (\ref{DDT-n=2m+1-1}) can be reduced to $y^{2^{m+1}}=y^2$, which implies $y\in \gf_{2^m}$. Since ${\rm gcd}(m,2m)=m$, then we have $y\in \gf_{2^m}\setminus \{0,1\}$. From the discussion before Case 3,
 when $y=0,1$, we have $t=1$. Therefore (\ref{DDT-n=2m+1-1}) has $2^m$ solutions when $t=1$. Next, we suppose that $y\ne 0,1$ and $t\ne 0,1$. From (\ref{DDT-n=2m+1-1}),
we get
\begin{equation}\label{DDT-n=2m-1}
\begin{aligned}
y^4&=(y^{2^{m+1}})^{2^{m+1}}
=t^{2^{m+1}}y^{2^{m+2}}+(t^{2^{m+1}}+1)y^{2^{m+1}}\\
&=t^{2^{m+1}}(ty^2+(t+1)y)^2+(t^{2^{m+1}}+1)(ty^2+(t+1)y)\\
&=t^{2^{m+1}+2}y^4+(t^{{2^{m+1}}+2}+t^{2^{m+1}+1}+t^{2^{m+1}}+t)y^2+(t+1)(t^{2^{m+1}}+1)y.
\end{aligned}
\end{equation}
A direct calculation of (\ref{DDT-n=2m-1}) shows
$$
y(y+1)((t^{{2^{m+1}}+2}+1)y^2+(t^{{2^{m+1}}+2}+1)y+(t^{2^{m+1}}+1)(t+1))=0.
$$
Since $y\ne 0,1$, then we have
\begin{equation}\label{DDT-n=2m-5}
(t^{{2^{m+1}}+2}+1)y^2+(t^{{2^{m+1}}+2}+1)y+(t^{2^{m+1}}+1)(t+1)=0.
\end{equation}
Denote the unit circle of $\gf_{2^n}$ by $U=\{x\in \gf_{2^n}|x^{2^m+1}=1\}$. When $t\in U\setminus \{1\}$, (\ref{DDT-n=2m-5}) has no solution. Now we suppose that $t\notin U$. Then (\ref{DDT-n=2m-5}) is equivalent to
\begin{equation}\label{DDT-n=2m-6}
y^2+y+\frac{(t^{2^{m+1}}+1)(t+1)}{t^{{2^{m+1}}+2}+1}=0.
\end{equation}
Observe that
\begin{equation*}
\begin{aligned}
{\rm Tr}_{1}^n\Big(\frac{(t^{2^{m+1}}+1)(t+1)}{t^{{2^{m+1}}+2}+1}\Big)&={\rm Tr}_{1}^m\Big(\frac{(t+1)^{2^{m+1}+1}+(t+1)^{2^m+2}}{(t^{2^m+1}+1)^2}\Big)\\
&={\rm Tr}_{1}^m\Big(\frac{(t^{2^m}+t)(t^{2^m+1}+1)+(t^{2^m}+t)^2}{(t^{2^m+1}+1)^2}\Big)\\
&=0.
\end{aligned}
\end{equation*}
This, together with Lemma \ref{lemma1-root}, shows that (\ref{DDT-n=2m-6}) has two solutions. Assume that $y_0$ is a solution of (\ref{DDT-n=2m-6}), we then determine whether $y_0$ is the solution of (\ref{DDT-n=2m+1-1}).
Combining (\ref{DDT-n=2m+1-1}) with (\ref{DDT-n=2m-6}) gives
$$
y_0^{2^{m+1}}+y_0+\frac{(t^{2^{m+1}}+1)(t^2+t)}{t^{{2^{m+1}+2}}+1}=0.
$$
On the other hand,
\begin{equation*}
\begin{aligned}
&(y_0^{2^{m+1}}+y_0)^{2^m}=\Big(\sum\limits_{i=0}^{m} (y_0^2+y_0)^{2^i}\Big)^{2^m}=\sum\limits_{i=0}^{m} (y_0^2+y_0)^{2^{m+i}}\\
&=\sum\limits_{i=0}^{m} \Big(\frac{(t^{2^m}+1)(t^2+1)}{t^{{2^{m+1}}+2}+1}\Big)^{2^i}=\sum\limits_{i=0}^{m} \Big(\frac{t(t^{2^m+1}+1)+t^{2^m}+t^2+t+1}{t^{{2^{m+1}}+2}+1}\Big)^{2^i}\\
&=\sum\limits_{i=0}^{m} \Big(\frac{t}{t^{2^m+1}+1}+\frac{t^2}{(t^{2^m+1}+1)^2}+\frac{t^{2^m}}{(t^{2^m+1}+1)^2}+\frac{t}{(t^{2^m+1}+1)^2}+\frac{1}{(t^{2^m+1}+1)^2}\Big)^{2^i}\\
&=\frac{t}{t^{2^m+1}+1}+\frac{t^{2^{m+1}}}{(t^{2^m+1}+1)^2}+\sum\limits_{i=0}^{m} \Big(\Big(\frac{t}{(t^{2^m+1}+1)^2}\Big)^{2^m}+\frac{t}{(t^{2^m+1}+1)^2}+\frac{1}{(t^{2^m+1}+1)^2}\Big)^{2^i}\\
&=\frac{t^{2^{m+1}+2}+t^{2^{m+1}}+t^{2^m+2}+t^{2^m}}{(t^{2^m+1}+1)^2}+1+{\rm Tr}_{1}^n\Big(\frac{t}{(t^{2^m+1}+1)^2}\Big)+{\rm Tr}_{1}^m\Big(\frac{1}{(t^{2^m+1}+1)^2}\Big)\\
&=\Big(\frac{(t^{2^{m+1}}+1)(t^2+t)}{t^{{2^{m+1}+2}}+1}\Big)^{2^m}+1+{\rm Tr}_{1}^m\Big(\frac{t^{2^m}+t+1}{t^{2^{m+1}+2}+1}\Big).
\end{aligned}
\end{equation*}
Hence, the solutions of (\ref{DDT-n=2m-6}) satisfy (\ref{DDT-n=2m+1-1}) if and only if ${\rm Tr}_{1}^m(\frac{t^{2^m}+t+1}{t^{2^{m+1}+2}+1})=1$. Moreover, the number of the solutions of (\ref{DDT-n=2m+1-1}) is 2 (resp. 0) if and only if ${\rm Tr}_{1}^m(\frac{t^{2^m}+t+1}{t^{2^{m+1}+2}+1})=1$ (resp. ${\rm Tr}_{1}^m(\frac{t^{2^m}+t+1}{t^{2^{m+1}+2}+1})=0$) in this case.  This completes the proof.

\end{proof}

By Theorem \ref{the-DDT} and the identities in (\ref{prop}), we can directly derive the exact number of the pairs $(a,b)\in \gf_{2^n}^2$ having a given value of the DDT of $F(x)$ as follows.

\begin{proposition}\label{the-FBCT-0}
Let $m$ be a positive integer and $F(x)=x^{2^{m+1}-1}$ be a power mapping over $\gf_{2^n}$. The number of pairs $(a,b)\in \gf_{2^n}^2$ in the spectrum of the {\rm DDT} of $F(x)$ satisfies
\begin{itemize}
\item [\rm (i)] $|\{(a,b)\in \gf_{2^n}^2: {\rm DDT}_{F}(a,b)=2^n\}|=1$;
\item [\rm (ii)] if $n=2m+1$, then
\begin{equation*}
\begin{aligned}
&|\{(a,b)\in \gf_{2^n}^2: {\rm DDT}_{F}(a,b)=2\}|=2^{n-1}(2^n-1);\\
&|\{(a,b)\in \gf_{2^n}^2: {\rm DDT}_{F}(a,b)=0\}|=2^{2n-1}+2^{n-1}-1;
\end{aligned}
\end{equation*}
\item [\rm (iii)] if $n=2m$, then
\begin{equation*}
\begin{aligned}
&|\{(a,b)\in \gf_{2^n}^2: {\rm DDT}_{F}(a,b)=2^m\}|=2^n-1;\\
&|\{(a,b)\in \gf_{2^n}^2: {\rm DDT}_{F}(a,b)=2\}|=(2^{n-1}-2^{m-1})(2^n-1);\\
&|\{(a,b)\in \gf_{2^n}^2: {\rm DDT}_{F}(a,b)=0\}|=(2^{n-1}+2^{m-1})(2^n-1).
\end{aligned}
\end{equation*}
\end{itemize}
\end{proposition}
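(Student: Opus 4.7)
The plan is to deduce Proposition \ref{the-FBCT-0} directly from Theorem \ref{the-DDT} combined with the two linear identities in (\ref{prop}) from Definition \ref{definition-DDT}. Part (i) is immediate: Theorem \ref{the-DDT}(i) singles out $(0,0)$ as the unique pair with ${\rm DDT}_F(a,b)=2^n$.

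For parts (ii) and (iii), I would first isolate the contribution from the trivial column $a=0$. By parts (i) and (ii) of Theorem \ref{the-DDT}, that column contributes exactly one entry of value $2^n$ (at $(0,0)$) and exactly $2^n-1$ entries of value $0$ (at $(0,b)$ with $b\neq 0$), and contributes nothing to any other DDT value. Consequently, it suffices to count, for each admissible $i$, the quantity $\omega_i$ of Definition \ref{definition-DDT} (which by convention ranges over $a\in\mathbb{F}_{2^n}^*$), and then add $2^n-1$ to the resulting count of DDT value $0$.

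For $n=2m+1$, Theorem \ref{the-DDT} shows that only the values $0$ and $2$ occur among $\omega_i$. The system (\ref{prop}) collapses to $\omega_0+\omega_2=2^n(2^n-1)$ and $2\omega_2=2^n(2^n-1)$, which yields $\omega_2=2^{n-1}(2^n-1)$ and $\omega_0=2^{n-1}(2^n-1)$; adjusting the latter by the $2^n-1$ pairs from the column $a=0$ gives the stated expression $2^{2n-1}+2^{n-1}-1$.

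For $n=2m$, Theorem \ref{the-DDT} shows that the admissible values of $\omega_i$ are $0$, $2$, and $2^m$, so (\ref{prop}) provides two equations in three unknowns. The missing datum I would supply by evaluating $\omega_{2^m}$ directly from Theorem \ref{the-DDT}(iv): the value $2^m$ is attained precisely when $t:=b/a^{2^{m+1}-1}=1$, i.e. when $b=a^{2^{m+1}-1}$; for each $a\in\mathbb{F}_{2^n}^*$ this pins $b$ down uniquely, yielding $\omega_{2^m}=2^n-1$. Substituting into (\ref{prop}) and solving the linear system gives $\omega_2=(2^{n-1}-2^{m-1})(2^n-1)$ and $\omega_0=(2^n-1)(2^{n-1}-1+2^{m-1})$; adding $2^n-1$ to $\omega_0$ then produces the stated value $(2^{n-1}+2^{m-1})(2^n-1)$. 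There is no serious obstacle in this argument once Theorem \ref{the-DDT} is available; the only subtlety worth handling carefully is the bookkeeping between the $\omega_i$'s, which are defined over $a\neq 0$, and the proposition's cardinalities, which are indexed over all of $\mathbb{F}_{2^n}^2$.
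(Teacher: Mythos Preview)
Your proposal is correct and follows exactly the approach indicated in the paper, which simply states that the result follows ``by Theorem \ref{the-DDT} and the identities in (\ref{prop})'' without giving further details. Your write-up supplies precisely those details---separating the $a=0$ column, reading off $\omega_{2^m}=2^n-1$ from the condition $t=1$, and solving the linear system from (\ref{prop})---and the bookkeeping you flag is handled correctly.
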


\section{The Feistel Boomerang Connectivity Table of $x^{2^{m+1}-1}$}\label{FBCT-result1}

This section is devoted to presenting a detailed study of the FBCT of the power mapping $F(x)=x^{2^{m+1}-1}$ over $\gf_{2^n}$, where $n=2m+1$ or $n=2m$. The second main result is given by the following theorem, which is derived through the computation of the number of solutions over $\mathbb{F}_{2^n}$ of the equation $x^{2^{m+1}-1}+(x+a)^{2^{m+1}-1}+(x+b)^{2^{m+1}-1}+(x+a+b)^{2^{m+1}-1}=0$.

\begin{theorem}\label{the-FBCT-1}
Let $F(x)=x^{2^{m+1}-1}$ be a power mapping over $\gf_{2^n}$. Then
\begin{itemize}
\item [\rm (i)] if $n=2m$, then
$$
{\rm FBCT}_{F}(a,b)=
\begin{cases}
    2^n, &  {\rm if}\,\,  ab(a+b); \\
    2^m, &  {\rm if}\,\, ab(a+b)\ne 0\,\, {\rm and}\,\,\frac{a}{b}\in \gf_{2^m} \setminus \{0,1\};\\
     0, &  {\rm if}\,\, ab(a+b)\ne 0\,\, {\rm and}\,\, \frac{a}{b}\in \gf_{2^n} \setminus \gf_{2^m};\\
\end{cases}
$$
\item [\rm (ii)] if $n=2m+1$, then
$$
{\rm FBCT}_{F}(a,b)=
\begin{cases}
    2^n, &  {\rm if}\,\,  ab(a+b); \\
     0, &  {\rm if}\,\, ab(a+b)\ne 0. \\
\end{cases}
$$
\end{itemize}
\end{theorem}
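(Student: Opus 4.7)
The plan is to reduce the FBCT-defining equation to a $\gf_2$-linearized polynomial $Q(y)$ whose kernel can be pinned down explicitly, and to combine this with a direct check on the four ``affine'' points lost when clearing denominators. The trivial case $ab(a+b)=0$ is immediate. Assume $a,b,a+b \neq 0$, set $y=x/a$ and $s=b/a$ (so $s \neq 0,1$), and rewrite the FBCT equation as a symmetric sum in $y, y+1, y+s, y+s+1$. Multiplying through by $y(y+1)(y+s)(y+s+1)$ and exploiting additivity $(u+v)^{2^{m+1}} = u^{2^{m+1}} + v^{2^{m+1}}$, the cancellations in the $y^{2^{m+1}}$-coefficient reduce the equation to
$$Q(y) := s(s+1)\, y^{2^{m+1}} + (\sigma^2+s)\, y^2 + (\sigma+s)^2\, y = 0, \qquad \sigma := s^{2^m}.$$
Because $Q$ is $\gf_2$-linear, its root set $V_Q$ is a $\gf_2$-subspace; a short verification gives $Q(y+1)=Q(y+s)=Q(y)$, so $V := \{0,1,s,s+1\} \subseteq V_Q$. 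Plugging $y \in V$ into the \emph{original} equation yields the single condition $s^{2^{m+1}-1}+(s+1)^{2^{m+1}-1}=1$, which after simplification becomes $\sigma^2=s^2$, i.e., $V$ is a set of bona fide solutions if and only if $s \in \gf_{2^m}$.

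For part (i), split according to whether $s \in \gf_{2^m}$. If $s \in \gf_{2^m} \setminus \{0,1\}$, then $\sigma=s$, so $Q(y)=s(s+1)(y^{2^{m+1}}+y^2)$, whose zero set is exactly $\gf_{2^m}$ (of size $2^m$); since $V \subset \gf_{2^m}$ and $V$ contributes, we obtain ${\rm FBCT}_F(a,b)=2^m$. In the complementary subcase $s \in \gf_{2^n} \setminus \gf_{2^m}$, I need $V_Q=V$. Set $\alpha=(\sigma^2+s)/(s(s+1))$ and $\beta=(\sigma+s)^2/(s(s+1)) \neq 0$; the equation reads $y^{2^{m+1}}=\alpha y^2+\beta y$. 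Raising to the $2^m$-th power and using $y^{2^{2m}}=y$ yields the auxiliary linearized relation $(1+\alpha^{2^m+1})y^2+\alpha^{2^m}\beta y+\beta^{2^m} y^{2^m}=0$. Squaring this and re-eliminating $y^{2^{m+1}}$ collapses the system to $y\cdot R(y)=0$ where $R$ is a cubic with leading coefficient $(1+\alpha^{2^m+1})^2$. A short identity yields $\alpha^{2^m+1}=1 \Leftrightarrow (s+\sigma)^3=0 \Leftrightarrow s\in\gf_{2^m}$, so this leading coefficient is nonzero here; hence $|V_Q| \leq 1+3=4$, which combined with $V \subseteq V_Q$ forces $V_Q=V$. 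Since $V$ contributes no genuine solutions in this subcase, ${\rm FBCT}_F(a,b)=0$.

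For part (ii) ($n=2m+1$), Theorem \ref{the-DDT} has already established that $F$ has differential uniformity $2$, so $F$ is APN; the property recorded immediately after Definition \ref{definition-FBCT}---namely, that $F$ is APN if and only if ${\rm FBCT}_F(a,b)=0$ for all $ab(a+b)\neq 0$---then delivers the conclusion at once. The principal obstacle in the whole proof is the cubic-degree reduction in the second subcase of part (i): the repeated Frobenius substitutions must telescope exactly as claimed, and the non-degeneracy $1+\alpha^{2^m+1}\neq 0$ must be verified through a small identity in the conjugate pair $(s,\sigma)$. Once these are in place, the matching bounds $V \subseteq V_Q$ and $|V_Q| \leq 4$ close the argument with no further case analysis.
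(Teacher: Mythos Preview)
Your argument is correct, and the overall strategy (normalise, clear denominators to reach the linearised polynomial $Q$, then split on whether the parameter lies in $\gf_{2^m}$) coincides with the paper's. Two points of genuine difference are worth noting.

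In the subcase $s\in\gf_{2^n}\setminus\gf_{2^m}$ of part (i), the paper raises $Q(y)=0$ to the $2^{m+1}$-th power (so the left-hand side becomes $y^4$, not $y^2$), substitutes once, and obtains the explicit factorisation
\[
(c^{2^{m+1}}+c^2)^3\,y(y+1)(y+c)(y+c+1)=0,
\]
which shows $V_Q=V$ directly. Your route via the $2^m$-th power introduces an intermediate $y^{2^m}$, forcing the extra squaring step and the cubic $R$; you then close the gap by the counting bound $|V_Q|\le 4$ together with $V\subseteq V_Q$, which in turn requires the non-degeneracy identity $\alpha^{2^m+1}=1\Leftrightarrow (s+\sigma)^3=0$. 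Both arguments are valid; the paper's explicit factorisation is a little cleaner, while yours trades that algebra for a short norm computation.

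For part (ii), the paper simply reruns the $n=2m$ analysis, noting that $\gcd(2m+1,m)=1$ eliminates the intermediate value. Your appeal to the APN characterisation recorded after Definition~\ref{definition-FBCT}, combined with the differential uniformity established in Theorem~\ref{the-DDT}, is a legitimate and tidier shortcut that the paper does not exploit.
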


\begin{proof}
To prove this theorem, according to Definition \ref{definition-FBCT}, we need to
count the number of the solutions of
$$F(x)+F(x+a)+F(x+b)+F(x+a+b)=0,$$
i.e.,
\begin{equation}\label{FBCT-n=2m-1}
x^{2^{m+1}-1}+(x+a)^{2^{m+1}-1}+(x+b)^{2^{m+1}-1}+(x+a+b)^{2^{m+1}-1}=0,
\end{equation}
where $a,b\in \gf_{2^n}$.

We start by considering the case $n=2m$.

{\textbf{Case 1:}} When $a=0$ or $b=0$ or $a=b$ with $a\ne 0$, it can be easily seen that (\ref{FBCT-n=2m-1}) holds for all $x\in \gf_{2^n}$, which gives
$$
 {\rm FBCT}_{F}(a,b)=2^n.
$$

{\textbf{Case 2:}} Assume that $ab(a+b)\ne 0$. If $x=0$, $a$, $b$ or $a+b$, then (\ref{FBCT-n=2m-1}) becomes
\begin{equation}\label{FBCT-n=2m-2}
a^{2^{m+1}-1}+b^{2^{m+1}-1}+(a+b)^{2^{m+1}-1}=0.
\end{equation}
Multiplying $ab(a+b)$ on both sides of (\ref{FBCT-n=2m-2}), it can be further simplified as $(\frac{a}{b})^{2^{m+1}}=(\frac{a}{b})^2$. Then we have $\frac{a}{b}\in \gf_{2^m}\setminus \{0,1\}$.  Next, assume that $x\ne 0$, $x\ne a$, $x\ne b$ and $x\ne a+b$. Let $c=\frac{a}{b}$ and $y=\frac{x}{b}$, we have $c\ne 0,1$ and $y\ne 0,1,c,c+1$.  Then, (\ref{FBCT-n=2m-1}) is equivalent to
$$
b^{2^{m+1}-1}(y^{2^{m+1}-1}+(y+c)^{2^{m+1}-1}+(y+1)^{2^{m+1}-1}+(y+c+1)^{2^{m+1}-1})=0.
$$
Since $b\ne 0$, thus we only need to consider the solutions of
\begin{equation}\label{FBCT-n=2m-3}
y^{2^{m+1}-1}+(y+c)^{2^{m+1}-1}+(y+1)^{2^{m+1}-1}+(y+c+1)^{2^{m+1}-1}=0.
\end{equation}
Multiplying $y(y+c)(y+1)(y+c+1)$ on both sides of (\ref{FBCT-n=2m-3}) gives
 \begin{equation*}
\begin{aligned}
&y^{2^{m+1}}(y+c)(y+1)(y+c+1)+(y^{2^{m+1}}+c^{2^{m+1}})y(y+1)(y+c+1)\\
&+(y^{2^{m+1}}+1)y(y+c)(y+c+1)+(y^{2^{m+1}}+c^{2^{m+1}}+1)y(y+c)(y+1)=0.
\end{aligned}
\end{equation*}
Expanding each of the terms of the above equation leads to
\begin{equation}\label{FBCT-n=2m-4}
y^{2^{m+1}}(c^2+c)+y^2(c^{2^{m+1}}+c)+y(c^{2^{m+1}}+c^2)=0.
\end{equation}
Since $c\ne 0,1$, (\ref{FBCT-n=2m-4}) is equivalent to
\begin{equation}\label{FBCT-n=2m-5}
y^{2^{m+1}}=\frac{c^{2^{m+1}}+c}{c^2+c}y^2+\frac{c^{2^{m+1}}+c^2}{c^2+c}y.
\end{equation}

{\textbf{Subcase 1:}}
Assume that $c\in \gf_{2^m}\setminus \{0,1\}$. Then $c^{2^m}=c$ and $c^{2^{m+1}}=c^2$. (\ref {FBCT-n=2m-5}) reduces to $(c^2+c)(y^{2^{m+1}}+y^2)=0$. We have $y\in \gf_{2^m}\setminus \{0,1,c,c+1\}$ since $c\ne 0,1$ and ${\rm gcd}(n,m)=m$. According to the discussion at the beginning of Case 2, we get $y=0$, $y=1$, $y=c$ and $y=c+1$ are the solutions of (\ref{FBCT-n=2m-4}) if $c\in \gf_{2^m}\setminus \{0,1\}$. Therefore (\ref{FBCT-n=2m-4}) has $2^m$ solutions when $c\in \gf_{2^m}\setminus \{0,1\}$.

{\textbf{Subcase 2:}}
Assume that $c\in \gf_{2^n}\setminus \gf_{2^m}$.
Raising $2^{m+1}$-th power to (\ref {FBCT-n=2m-5}), we get
\begin{equation}\label{FBCT-n=2m-6}
y^4=\frac{c^{2^{m+1}}+c^4}{(c^2+c)^{2^{m+1}}}y^{2^{m+2}}+\frac{c^{2^{m+2}}+c^4}{(c^2+c)^{2^{m+1}}}y^{2^{m+1}}.
\end{equation}
Substituting (\ref{FBCT-n=2m-5}) into (\ref{FBCT-n=2m-6}) and then multiplying $(c^2+c)^{2^{m+1}+2}$ on both sides of this equation, we have
\begin{equation*}
(c^{2^{m+1}}+c^2)^2((c^{2^{m+1}}+c^2)y^4+(c^{2^{m+1}}+c^2)(c^2+c+1)y^2+(c^{2^{m+1}}+c^2)(c^2+c)y)=0,
\end{equation*}
which can be further rewritten as
\begin{equation}\label{FBCT-n=2m-8}
(c^{2^{m+1}}+c^2)^3y(y+1)(y+c)(y+c+1)=0.
\end{equation}
Since $c\ne 0,1$ and $y\ne 0,1,c,c+1$, we have (\ref{FBCT-n=2m-8}) has no solution. Hence, (\ref{FBCT-n=2m-5}) has no solution in this case. This completes the proof of the case for $n=2m$.

For the case of $n=2m+1$, the desired result can be similarly proved. Note that in this case ${\rm gcd}(n,m)={\rm gcd}(2m+1,m)=1$ which leads to (\ref{FBCT-n=2m-4}) has no solution. This implies that ${\rm FBCT}_{F}(a,b)=2^m$ can not be achieved. This completes the proof.
\end{proof}
The distribution of the values of the FBCT of the power mapping $F(x)=x^{2^{m+1}-1}$ over $\gf_{2^n}$  can be obtained according to Theorem \ref{the-FBCT-1}. We omit the proof here.

\begin{proposition}\label{the-spec-FBCT-1}
Let $m$ be a positive integer and $F(x)=x^{2^{m+1}-1}$ be a power mapping over $\gf_{2^n}$.
\begin{itemize}
\item [\rm (i)] If $n=2m$, then
\begin{equation*}
\begin{aligned}
&|\{(a,b)\in \gf_{2^n}^2:{\rm FBCT}_{F}(a,b)=2^n\}|=3\times 2^n-2;\\
&|\{(a,b)\in \gf_{2^n}^2:{\rm FBCT}_{F}(a,b)=2^m\}|=(2^m-2)(2^n-1);\\
&|\{(a,b)\in \gf_{2^n}^2:{\rm FBCT}_{F}(a,b)=0\}|=(2^n-2^m)(2^n-1);
\end{aligned}
\end{equation*}
\item [\rm (ii)] If $n=2m+1$, then
\begin{equation*}
\begin{aligned}
&|\{(a,b)\in \gf_{2^n}^2:{\rm FBCT}_{F}(a,b)=2^n\}|=3\times 2^n-2;\\
&|\{(a,b)\in \gf_{2^n}^2:{\rm FBCT}_{F}(a,b)=0\}|=(2^n-2)(2^n-1).
\end{aligned}
\end{equation*}
\end{itemize}
\end{proposition}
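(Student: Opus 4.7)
The plan is to derive the three cardinalities directly from Theorem~\ref{the-FBCT-1}, which already partitions $\gf_{2^n}^2$ into (two or three, depending on $n$) classes according to the value of ${\rm FBCT}_{F}(a,b)$. Since the theorem has already done the hard work of computing the entry values, the proposition reduces to a clean counting exercise over $\gf_{2^n}^2$.

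First, I would count the pairs $(a,b)$ with ${\rm FBCT}_{F}(a,b)=2^n$, which by Theorem~\ref{the-FBCT-1} is precisely the set $S=\{(a,b)\in\gf_{2^n}^2:ab(a+b)=0\}$. Writing $S$ as the union of the three ``lines'' $L_1=\{a=0\}$, $L_2=\{b=0\}$, and $L_3=\{a=b\}$, each of cardinality $2^n$, one checks that all pairwise intersections, as well as the triple intersection, equal the single point $\{(0,0)\}$. Inclusion–exclusion then gives $|S|=3\cdot 2^n-3+1=3\cdot 2^n-2$, establishing the first equality in both parts~(i) and (ii).

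For part~(i) with $n=2m$, I would parametrize the complementary set by a nonzero $b\in\gf_{2^n}^{*}$ and the ratio $c=a/b\in\gf_{2^n}$; the requirement $ab(a+b)\ne 0$ becomes exactly $c\notin\{0,1\}$. By Theorem~\ref{the-FBCT-1}, the value $2^m$ occurs iff $c\in\gf_{2^m}\setminus\{0,1\}$, which contributes $(2^m-2)(2^n-1)$ pairs, and the value $0$ occurs iff $c\in\gf_{2^n}\setminus\gf_{2^m}$, contributing $(2^n-2^m)(2^n-1)$ pairs. A quick sanity check that the three counts sum to $2^{2n}$ can be carried out to confirm internal consistency.

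For part~(ii) with $n=2m+1$, Theorem~\ref{the-FBCT-1} leaves only the two values $2^n$ and $0$, so the count for $0$ is simply $2^{2n}-(3\cdot 2^n-2)=(2^n-1)(2^n-2)$; equivalently, the same parametrization $(b,c)$ with $b\ne 0$ and $c\notin\{0,1\}$ gives $(2^n-1)(2^n-2)$ pairs directly. There is really no obstacle in this proposition: once Theorem~\ref{the-FBCT-1} classifies the entries, everything follows from inclusion–exclusion and the affine change of variables $c=a/b$. The only point that warrants care is verifying that the three defining lines of $S$ meet pairwise only at the origin, so that the constant $3\cdot 2^n-2$ is exact.
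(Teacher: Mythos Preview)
Your proposal is correct and follows exactly the route the paper intends: the paper states that the distribution follows directly from Theorem~\ref{the-FBCT-1} and explicitly omits the counting details, and your inclusion--exclusion for the locus $ab(a+b)=0$ together with the parametrization by $b\in\gf_{2^n}^*$ and $c=a/b$ is precisely the straightforward bookkeeping that fills in that omission.
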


As a consequence,  the Feistel boomerang uniformity of  $F(x)=x^{2^{m+1}-1}$ over $\gf_{2^n}$.

\begin{corollary}\label{the-spec-FBCT-2}
Let $m$ be a positive integer and $F(x)=x^{2^{m+1}-1}$ be a power mapping over $\gf_{2^n}$. Then the Feistel boomerang uniformity of $F(x)$ satisfies
$$\beta_c(F)=
\begin{cases}
    2^m, &  {\rm if}\,\,  n=2m; \\
    0, &  {\rm if}\,\, n=2m+1.\\
\end{cases}
$$
\end{corollary}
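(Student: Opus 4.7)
The plan is to read off the Feistel boomerang uniformity directly from the case analysis provided in Theorem~\ref{the-FBCT-1}. By definition, $\beta_c(F)$ is the maximum of ${\rm FBCT}_F(a,b)$ taken over the restricted set of pairs $(a,b) \in \gf_{2^n}^2$ with $ab(a+b) \neq 0$; thus the trivial entries of value $2^n$, which occur precisely when $a = 0$, $b = 0$, or $a = b$, are excluded at the outset and play no role in computing $\beta_c(F)$.

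For $n = 2m+1$, Theorem~\ref{the-FBCT-1}(ii) asserts that ${\rm FBCT}_F(a,b) = 0$ on the entire restricted domain, so taking the maximum immediately yields $\beta_c(F) = 0$. For $n = 2m$, Theorem~\ref{the-FBCT-1}(i) shows that on the same restricted domain every entry is either $2^m$ (when $a/b \in \gf_{2^m} \setminus \{0,1\}$) or $0$ (when $a/b \in \gf_{2^n} \setminus \gf_{2^m}$). To conclude $\beta_c(F) = 2^m$ it remains to verify that the value $2^m$ is attained for at least one admissible pair: choosing any $c \in \gf_{2^m} \setminus \gf_2$ (which exists whenever $m \geq 2$) and setting $a = c$, $b = 1$, one has $a/b = c \in \gf_{2^m} \setminus \{0,1\}$ while $ab(a+b) = c(c+1) \neq 0$, so this pair realizes the maximal value.

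There is essentially no obstacle here: the statement is a direct read-off from Theorem~\ref{the-FBCT-1}, and the only mildly subtle point is confirming that the value $2^m$ is actually achieved in the case $n = 2m$, which reduces to the trivial observation that $\gf_{2^m}$ contains elements outside $\gf_2$ as soon as $m \geq 2$. All heavy lifting was done in the proof of Theorem~\ref{the-FBCT-1}, so this corollary is a one-paragraph consequence of that result.
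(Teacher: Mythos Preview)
Your proof is correct and follows the paper's approach exactly: the corollary is stated there without proof as an immediate consequence of Theorem~\ref{the-FBCT-1} (equivalently, of the distribution in Proposition~\ref{the-spec-FBCT-1}). Your parenthetical remark that a maximizing pair exists only for $m \geq 2$ is in fact more careful than the paper, which tacitly ignores the degenerate case $m=1$ (where $\gf_{2^m}\setminus\{0,1\}=\emptyset$ and hence $\beta_c(F)=0$, not $2^m$).
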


\section{The Feistel Boomerang Difference Table of $x^{2^{m+1}-1}$}\label{FBDT-result1}

In this section, we give explicit values of all the entries of the FBDT of $F(x)=x^{2^{m+1}-1}$ over $\gf_{2^n}$, where $n=2m+1$ or $n=2m$.
The third main result is given by the following theorem, which is derived through the computation of the number of solutions over $\mathbb{F}_{2^n}$ of the system of equations:
$$
\left \{\begin{array}{ll}
x^{2^{m+1}-1}+(x+a)^{2^{m+1}-1}+(x+b)^{2^{m+1}-1}+(x+a+b)^{2^{m+1}-1}=0,\\
x^{2^{m+1}-1}+(x+a)^{2^{m+1}-1}=c.
\end{array}\right.
$$

\begin{theorem}\label{the-FBDT-1}
Let $F(x)=x^{2^{m+1}-1}$ be a power mapping over $\gf_{2^n}$. Then
\begin{itemize}
\item [\rm (i)] if $n=2m+1$, then
$${\rm FBDT}_{F}(a,c,b)=
\begin{cases}
    2^n, &  {\rm if}\,\,  (a,c,b)\in I_{2^n}; \\
     2, &  {\rm if}\,\, (a,c,b)\in I_{2}; \\
     0, &  {\rm if}\,\, (a,c,b)\in I_{0}^{(1)}\cup I_{0}^{(2)}\cup I_{0}^{(3)},
\end{cases}
$$
where $t=\frac{c}{a^{2^{m+1}-1}}$ and
\begin{align*}
&I_{2^n}=\{(a,c,b)\in \gf_{2^n}^3: a=c=0\},\\
&I_{2}=\{(a,c,b)\in \gf_{2^n}^3: a\ne 0,\, b\in \{0, a\}\,\, {\rm and}\,\, {\rm Tr}_{1}^n\Big(\frac{1}{t^{2^m+1}}\Big)=1\},\\
&I_{0}^{(1)}=\{(a,c,b)\in \gf_{2^n}^3: a=0\,\, {\rm and}\,\, c\ne 0\},\\
&I_{0}^{(2)}=\{(a,c,b)\in \gf_{2^n}^3: a\ne 0,\, b\in \{0, a\}\,\,  {\rm and}\,\, {\rm Tr}_{1}^n\Big(\frac{1}{t^{2^m+1}}\Big)=0\},\\
&I_{0}^{(3)}=\{(a,c,b)\in \gf_{2^n}^3: ab(a+b)\ne 0 \},
\end{align*}
\item [\rm (ii)] if $n=2m$, then
$${\rm FBDT}_{F}(a,c,b)=
\begin{cases}
    2^n, &  {\rm if}\,\,  (a,c,b)\in I_{2^n}; \\
    2^m, &  {\rm if}\,\, (a,c,b)\in I_{2^m}^{(1)}\cup I_{2^m}^{(2)}; \\
     2, &  {\rm if}\,\, (a,c,b)\in I_{2}^{(1)}\cup I_{2}^{(2)}; \\
     0, &  {\rm if}\,\, (a,c,b)\in I_{0}^{(1)}\cup I_{0}^{(2)}\cup I_{0}^{(3)}\cup I_{0}^{(4)}\cup I_{0}^{(5)}\cup I_{0}^{(6)},
\end{cases}
$$
where $t=\frac{c}{a^{2^{m+1}-1}}$ and
\begin{align*}
&I_{2^n}=\{(a,c,b)\in \gf_{2^n}^3: a=c=0\},\\
&I_{2^m}^{(1)}=\{(a,c,b)\in \gf_{2^n}^3: a\ne 0,\, c\ne 0,\, b\in \{0, a\}\,\, {\rm and}\,\, t=1\},\\
&I_{2^m}^{(2)}=\{(a,c,b)\in \gf_{2^n}^3: a\ne 0,\,c\ne 0,\, b\notin \{0, a\},\, t=1\,\, {\rm and}\,\, \frac{a}{b}\in \gf_{2^m}\setminus \{0,1\}\},\\
&I_{2}^{(1)}=\{(a,c,b)\in \gf_{2^n}^3: a\ne 0,\, c\ne 0,\, b\in \{0, a\},\,  t^{2^m+1}\ne 1\,\, {\rm and}\,\, {\rm Tr}_{1}^m\Big(\frac{t^{2^m}+t+1}{t^{2^{m+1}+2}+1}\Big)=1\},\\
&I_{2}^{(2)}=\{(a,c,b)\in \gf_{2^n}^3: a\ne 0,\, c=0,\, b\in \{0, a\}\,\, {\rm and}\,\, m\,\, {\rm is\,\, odd}\},\\
&I_{0}^{(1)}=\{(a,c,b)\in \gf_{2^n}^3: a=0\,\, {\rm and}\,\, c\ne 0\},\\
&I_{0}^{(2)}=\{(a,c,b)\in \gf_{2^n}^3: a\ne 0,\, c=0,\, b\in \{0, a\}\,\, {\rm and}\,\, m\,\, {\rm is\,\, even}\},\\
&I_{0}^{(3)}=\{(a,c,b)\in \gf_{2^n}^3: a\ne 0,\, c\ne 0,\, b\in \{0, a\},\,  t^{2^m+1}\ne 1\,\, {\rm and}\,\, {\rm Tr}_{1}^m\Big(\frac{t^{2^m}+t+1}{t^{2^{m+1}+2}+1}\Big)=0\},\\
&I_{0}^{(4)}=\{(a,c,b)\in \gf_{2^n}^3: a\ne 0,\, c\ne 0,\, b\in \{0, a\},\, t\ne 1\,\, {\rm and}\,\, t^{2^m+1}=1\},\\
&I_{0}^{(5)}=\{(a,c,b)\in \gf_{2^n}^3: a\ne 0,\, c\ne 0,\,  b\notin \{0, a\},\,t\ne 1\,\,  {\rm and}\,\, \frac{a}{b}\in \gf_{2^m}\setminus \{0,1\}\},\\
&I_{0}^{(6)}=\{(a,c,b)\in \gf_{2^n}^3: a\ne 0,\, c\ne 0,\, b\notin \{0, a\}\,\, {\rm and}\,\, \frac{a}{b}\in \gf_{2^n}\setminus \gf_{2^m}\}.
\end{align*}
\end{itemize}
\end{theorem}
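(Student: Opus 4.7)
The plan is to reduce ${\rm FBDT}_F(a,c,b)$ to prior results via a case analysis on $(a,b)$, invoking Theorem~\ref{the-DDT} when the FBCT equation collapses and Theorem~\ref{the-FBCT-1} when the FBCT count is already forced to be zero; only one genuinely new computation remains.

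I would first dispose of the configurations in which one of the two defining equations is trivial. If $a=0$, the second equation forces $c=0$, yielding $2^n$ when $c=0$ (set $I_{2^n}$) and $0$ otherwise (set $I_0^{(1)}$). If $a\neq 0$ and $b\in\{0,a\}$, the FBCT equation is satisfied by every $x$, so ${\rm FBDT}_F(a,c,b)={\rm DDT}_F(a,c)$; a direct translation of Theorem~\ref{the-DDT} across the parity of $n$ and the trace/norm conditions on $t=c/a^{2^{m+1}-1}$ then fills $I_2$ and the relevant portion of $I_0^{(2)}$ in case~(i), and $I_{2^m}^{(1)}$, $I_2^{(1)}$, $I_2^{(2)}$, $I_0^{(2)}$, $I_0^{(3)}$, $I_0^{(4)}$ in case~(ii). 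Conversely, if $ab(a+b)\neq 0$ but $n=2m+1$, or if $n=2m$ with $a/b\notin\gf_{2^m}$, Theorem~\ref{the-FBCT-1} already gives ${\rm FBCT}_F(a,b)=0$, hence ${\rm FBDT}_F(a,c,b)=0$; this covers $I_0^{(3)}$ of case~(i) and $I_0^{(6)}$ of case~(ii).

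The only genuinely new case is $n=2m$ with $\alpha:=a/b\in\gf_{2^m}\setminus\{0,1\}$. From the proof of Theorem~\ref{the-FBCT-1}, the $2^m$ solutions of the FBCT equation are precisely $x=by$ with $y$ ranging over $\gf_{2^m}$. I would then compute $F(x)+F(x+a)$ on this set by factoring out $b^{2^{m+1}-1}$: the task reduces to $y^{2^{m+1}-1}+(y+\alpha)^{2^{m+1}-1}$ for $y\in\gf_{2^m}$. Since any $z\in\gf_{2^m}$ satisfies $z^{2^{m+1}}=z^2$, one has $z^{2^{m+1}-1}=z$ throughout (including $z=0$), so the sum collapses to the constant $\alpha$. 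Therefore $F(x)+F(x+a)=\alpha\,b^{2^{m+1}-1}$ for every FBCT solution. The short identity $\alpha\,b^{2^{m+1}-1}=a^{2^{m+1}-1}$---equivalent to $\alpha^{2^{m+1}-2}=1$, which holds because $\alpha\in\gf_{2^m}^*$ yields $\alpha^{2^m-1}=1$ and $2^{m+1}-2=2(2^m-1)$---shows this constant equals $a^{2^{m+1}-1}$, i.e., corresponds to $t=1$. Consequently ${\rm FBDT}_F(a,c,b)=2^m$ when $t=1$ (set $I_{2^m}^{(2)}$) and $0$ when $t\neq 1$ (set $I_0^{(5)}$).

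The only real obstacle is verifying that $F(x)+F(x+a)$ is \emph{constant} on the FBCT-solution set and that this constant is exactly $a^{2^{m+1}-1}$; both facts hinge on $z\mapsto z^{2^{m+1}-1}$ acting as the identity on $\gf_{2^m}$ when $n=2m$, which also explains why no analogous case arises for $n=2m+1$. Everything else is routine bookkeeping that matches the case analysis to the disjoint union of the $I$-sets listed in the theorem.
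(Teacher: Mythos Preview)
Your proposal is correct and follows the same three-case skeleton as the paper (the $a=0$ case, the $b\in\{0,a\}$ reduction to ${\rm DDT}_F(a,c)$, and the appeal to Theorem~\ref{the-FBCT-1} when ${\rm FBCT}_F(a,b)=0$). The one substantive difference is how you handle the remaining case $n=2m$, $\alpha=a/b\in\gf_{2^m}\setminus\{0,1\}$. The paper assumes $t=1$, clears denominators to obtain the system $a^2x^{2^{m+1}}+a^{2^{m+1}}x^2=0$ together with the FBCT equation, and counts roots via $\gcd(2^{m+1}-2,2^n-1)=2^m-1$; the complementary conclusion for $t\neq1$ is then implicit (e.g.\ from $\sum_c{\rm FBDT}_F(a,c,b)={\rm FBCT}_F(a,b)=2^m$). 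You instead use that the FBCT solution set is exactly $b\,\gf_{2^m}$ and observe that $z\mapsto z^{2^{m+1}-1}$ restricts to the identity on $\gf_{2^m}$, so $F(x)+F(x+a)$ is the constant $\alpha\,b^{2^{m+1}-1}=a^{2^{m+1}-1}$ on that set; this simultaneously yields the $2^m$ count for $t=1$ and the $0$ count for $t\neq1$ without any polynomial manipulation. Your route is shorter and makes the $t\neq1$ case explicit rather than relying on a summation identity.
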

\begin{proof}
Let $a, c, b\in \gf_{2^n}$, we need to determine the number of the solutions of
\begin{equation}\label{FBDT-n=2m-1-1}
\arraycolsep=1.2pt\def\arraystretch{1.4}
\left \{\begin{array}{ll}
F(x)+F(x+a)+F(x+b)+F(x+a+b)=0,\\
F(x)+F(x+a)=c.
\end{array}\right.
\end{equation}

First, we consider the case of $n=2m+1$.

{\textbf{Case 1:}} Assume that $a=0$. Obviously, (\ref{FBDT-n=2m-1-1}) has $0$
or $2^n$ solutions and it has $2^n$ solutions iff $c=0$. This implies that
$$
{\rm FBDT}_{F}(0,c,b)=
\begin{cases}
    2^n, &  {\rm if}\,\, a=0,\, c=0; \\
     0, &  {\rm if}\,\, a=0,\, c\ne 0.
\end{cases}
$$

{\textbf{Case 2:}} Assume that $a\ne 0$. If $b=0$ or $b=a$,  (\ref{FBDT-n=2m-1-1}) is reduced to $F(x)+F(x+a)=c$. Hence, we have ${\rm FBDT}_{F}(a,c,b)={\rm DDT}_{F}(a,c)$. Let $t=\frac{c}{a^{2^{m+1}-1}}$, then by Theorem \ref{the-DDT}, we have
$$
{\rm FBDT}_{F}(a, c, b)=
\begin{cases}
    2, &  {\rm if}\,\, a\ne 0,\, c\ne 0,\,  b\in \{0, a\}\,\,{\rm and}\,\, {\rm Tr}_{1}^n\Big(\frac{1}{t^{2^m+1}}\Big)=1;\\
    0, &  {\rm if}\,\,  a\ne 0,\, c=0,\, b\in \{0, a\},\,  \\
    &{\rm or}\,\, a\ne 0,\, c\ne0,\, b\in \{0, a\}\,\,  {\rm and}\,\, {\rm Tr}_{1}^n\Big(\frac{1}{t^{2^m+1}}\Big)=0.
\end{cases}
$$

{\textbf{Case 3:}} Assume here that $a\ne 0$, $b\ne 0$ and $a\ne b$. According to Theorem \ref{the-FBCT-1}, the first equation of (\ref{FBDT-n=2m-1-1}) has no solution. Hence, (\ref{FBDT-n=2m-1-1}) has no solution.

Next, we study the case when $n=2m$.

{\textbf{Case 1:}} Assume that $a=0$. The proof is similar to the discussion in Case 1 for $n=2m+1$.

{\textbf{Case 2:}}
Assume that $a\ne 0$. If $b=0$ or $b=a$, similar to the discussion in Case 2 for $n=2m+1$, we also have ${\rm FBDT}_{F}(a,c,b)={\rm DDT}_{F}(a,c)$ in this case. Let $t=\frac{c}{a^{2^{m+1}-1}}$, then by applying Theorem \ref{the-DDT}, we get
$$
{\rm FBDT}_{F}(a, c, b)=
\begin{cases}
    2^m, &  {\rm if}\,\, a\ne 0,\, c\ne 0,\, b\in \{0, a\},\, t=1\, ;\\
    2, &  {\rm if}\,\, a\ne 0,\, c\ne 0,\, b\in \{0, a\},\, t^{2^m+1}\ne 1\,\, {\rm and}\,\, {\rm Tr}_{1}^m\Big(\frac{t^{2^m}+t+1}{t^{2^{m+1}+2}+1}\Big)=1\\
    &{\rm or}\,\, a\ne 0,\, c=0,\, b\in \{0, a\}\,\, {\rm and}\,\, m\,\, {\rm is\,\, odd}; \\
    0, &  {\rm if}\,\, a\ne 0,\, c=0,\, b\in \{0, a\}\,\, {\rm and}\,\, m\,\, {\rm is\,\, even} \\
   &{\rm or}\,\,  a\ne 0,\, c\ne 0,\, b\in \{0, a\},\, t\ne 1\,\, {\rm and}\,\, t^{2^m+1}=1\, \\
   &{\rm or}\,\, a\ne 0,\, c\ne 0,\, b\in \{0, a\},\, t^{2^m+1}\ne 1\,\, {\rm and}\,\, {\rm Tr}_{1}^m\Big(\frac{t^{2^m}+t+1}{t^{2^{m+1}+2}+1}\Big)=0.
\end{cases}
$$

{\textbf{Case 3:}}
Assume that $ab(a+b)\ne 0$. According to Case 2 in Theorem \ref{the-FBCT-1}, we have $x=0, a, b, a+b$ are the solutions of the first equation of (\ref{FBDT-n=2m-1-1}) when $\frac{a}{b}\in \gf_{2^m}\setminus \{0,1\}$. Note that $x=0, a, b, a+b$ are the solutions of the second equation of (\ref{FBDT-n=2m-1-1}) if $t=1$. Thus $x=0, a, b, a+b$ are the solutions of (\ref{FBDT-n=2m-1-1}) when $\frac{a}{b}\in \gf_{2^m}\setminus \{0,1\}$ and $t=1$. Next we assume that $x\notin \{0, a, b, a+b\}$. If $\frac{a}{b}\in \gf_{2^m}\setminus \{0,1\}$ and $t=1$, we get $a^2b^{2^{m+1}}+a^{2^{m+1}}b^2=0$ and $ca=a^{2^{m+1}}$.  Multiplying $x(x+a)(x+b)(x+a+b)$ and $x(x+a)$ on both sides of the first and the second equations of (\ref{FBDT-n=2m-1-1}) respectively gives
\begin{equation}\label{FBDT-n=2m-1-2}
\arraycolsep=1.2pt\def\arraystretch{1.4}
\left \{\begin{array}{ll}
a^2bx^{2^{m+1}}+ab^2x^{2^{m+1}}+a^{2^{m+1}}bx^2+ab^{2^{m+1}}x^2=0,\\
a^2x^{2^{m+1}}+a^{2^{m+1}}x^2=0.
\end{array}\right.
\end{equation}
Since $a,b\ne 0$, $a^{2^{m+1}}x^2=a^2x^{2^{m+1}}$ and $b^{2^{m+1}}=\frac{a^{2^{m+1}}b^2}{a^2}$, then (\ref{FBDT-n=2m-1-2}) can be further reduced to
$$
a^2x^{2^{m+1}}+a^{2^{m+1}}x^2=0,
$$
which has  $2^m$ solutions due to ${\gcd}(2^{m+1}-2,2^n-1)=2^m-1$, including the solutions $x=0, a, b, a+b$.
Thus (\ref{FBDT-n=2m-1-2}) has $2^m$ solutions when $\frac{a}{b}\in \gf_{2^m}\setminus \{0,1\}$ and $t=1$. According to Theorem \ref{the-FBCT-1}, when $ab(a+b)\ne 0$ and $\frac{a}{b}\in \gf_{2^n}\setminus \gf_{2^m}$, the first equation of (\ref{FBDT-n=2m-1-1}) has no solution.
To conclude this case, we obtain
$$
{\rm FBDT}_{F}(a, c, b)=
\begin{cases}
    2^m, &  {\rm if}\,\, a\ne 0,\, c\ne 0,\, b\notin \{0,a\},\, \frac{a}{b}\in \gf_{2^m}\setminus \{0,1\}\,\, {\rm and}\,\, t=1\,;\\
    0, &  {\rm if}\,\, a\ne 0,\, c\ne 0,\, b\notin \{0,a\},\,  \frac{a}{b}\in \gf_{2^m}\setminus \{0,1\}\,\, {\rm and}\,\, t\ne 1\, \\
    &{\rm or}\,\, a\ne 0,\, c\ne 0,\, b\notin \{0,a\},\, \frac{a}{b}\in \gf_{2^n}\setminus \gf_{2^m}.
\end{cases}
$$
This completes the proof.
\end{proof}

Using Theorem \ref{the-DDT} and Theorem \ref{the-FBDT-1}, the following result can be derived.

\begin{proposition}\label{cou-FBDT-1}
Let $m$ be a positive integer and $F(x)=x^{2^{m+1}-1}$ be a power mapping over $\gf_{2^n}$.
\begin{itemize}
\item [\rm (i)] If $n=2m+1$, then
\begin{equation*}
\begin{aligned}
&|\{(a,c,b)\in \gf_{2^n}^3:{\rm FBDT}_{F}(a,c,b)=2^n\}|=2^n;\\
&|\{(a,c,b)\in \gf_{2^n}^3:{\rm FBDT}_{F}(a,c,b)=2\}|=2^n(2^n-1);\\
&|\{(a,c,b)\in \gf_{2^n}^3:{\rm FBDT}_{F}(a,c,b)=0\}|=2^{2n}(2^n-1);
\end{aligned}
\end{equation*}

\item [\rm (ii)] If $n=2m$, then
\begin{equation*}
\begin{aligned}
&|\{(a,c,b)\in \gf_{2^n}^3:{\rm FBDT}_{F}(a,c,b)=2^n\}|=2^n;\\
&|\{(a,c,b)\in \gf_{2^n}^3:{\rm FBDT}_{F}(a,c,b)=2^m\}|=2^m(2^n-1);\\
&|\{(a,c,b)\in \gf_{2^n}^3:{\rm FBDT}_{F}(a,c,b)=2\}|=(2^n-2^m)(2^n-1);\\
&|\{(a,c,b)\in \gf_{2^n}^3:{\rm FBDT}_{F}(a,c,b)=0\}|=2^{2n}(2^n-1).
\end{aligned}
\end{equation*}
\end{itemize}

\end{proposition}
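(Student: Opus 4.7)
The plan is to compute the cardinality of each level set of ${\rm FBDT}_F$ directly from the explicit partition of $\gf_{2^n}^3$ supplied by Theorem~\ref{the-FBDT-1}, and then collect the answers according to the common FBDT value. The key reduction is that whenever $b\in\{0,a\}$ the system defining the FBDT collapses to the single equation $F(x)+F(x+a)=c$, so the count in those strata is $2$ times the number of pairs $(a,c)$ achieving a given DDT value; this means Proposition~\ref{the-FBCT-0} furnishes essentially all the data we need for the $b\in\{0,a\}$ cases. The only genuinely new counting occurs in the stratum $I_{2^m}^{(2)}$ (when $n=2m$), where $b\notin\{0,a\}$.

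For the case $n=2m+1$, I would proceed as follows. First, $|I_{2^n}|=2^n$ is immediate since $a=c=0$ leaves $b$ free. Next, the stratum $I_2$ contributes $2$ (choices of $b$) times the number of pairs $(a,c)$ with $a\neq 0$, $c\neq 0$ and ${\rm Tr}_1^n(1/t^{2^m+1})=1$; but this pair count is exactly the DDT-value-$2$ count $2^{n-1}(2^n-1)$ from Proposition~\ref{the-FBCT-0}(ii), giving $|I_2|=2^n(2^n-1)$. Finally, the remaining $2^{3n}-2^n-2^n(2^n-1)=2^{2n}(2^n-1)$ triples must land in $I_0^{(1)}\cup I_0^{(2)}\cup I_0^{(3)}$. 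The arithmetic works out because of the identity $2^{3n}=2^n+2^n(2^n-1)+2^{2n}(2^n-1)$, so one may simply assert the value for FBDT$=0$ by subtraction rather than summing $|I_0^{(1)}|$, $|I_0^{(2)}|$, $|I_0^{(3)}|$ separately.

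For the case $n=2m$, the work is similar but with one extra stratum. The $2^m$-value count splits into $|I_{2^m}^{(1)}|+|I_{2^m}^{(2)}|$. For $I_{2^m}^{(1)}$, the condition $t=1$ pins down $c=a^{2^{m+1}-1}$, so each $a\in\gf_{2^n}^*$ contributes $2$ triples via $b\in\{0,a\}$, giving $2(2^n-1)$. For $I_{2^m}^{(2)}$, again $c$ is determined by $a$ (via $t=1$), while the condition $a/b\in\gf_{2^m}\setminus\{0,1\}$ translates to $b=a/e$ with $e$ ranging over the $2^m-2$ elements of $\gf_{2^m}\setminus\{0,1\}$; this yields $(2^m-2)(2^n-1)$ triples. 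Summing gives $2^m(2^n-1)$ as claimed. The value-$2$ count combines $|I_2^{(1)}|$ and $|I_2^{(2)}|$: in both parities of $m$, the identity $|I_2^{(1)}|+|I_2^{(2)}|=2(2^n-1)(2^{n-1}-2^{m-1})$ emerges after using Proposition~\ref{the-FBCT-0}(iii) to count, per $a$, the $c$'s giving DDT value $2$, and then correcting for whether $c=0$ is included in that count (which happens exactly when $m$ is odd, where $I_2^{(2)}$ makes up the deficit).

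The main subtlety, and the step I expect to be the most delicate, is the bookkeeping at $c=0$ for $n=2m$: Theorem~\ref{the-DDT}(iii) gives ${\rm DDT}_F(a,0)=2$ when $m$ is odd and $0$ when $m$ is even, so the $c=0$ row must be subtracted from (respectively, left alone in) the DDT-value-$2$ count before it can be identified with $|I_2^{(1)}|/2$. Once this matches, the FBDT$=0$ count follows from the total $2^{3n}$ minus the other three totals, which reduces by direct algebra to $2^{2n}(2^n-1)$, completing the proof. Throughout, no new equation-solving is needed beyond what was already carried out for Theorems~\ref{the-DDT},~\ref{the-FBCT-1}, and~\ref{the-FBDT-1}; the argument is pure enumeration on the partition they provide.
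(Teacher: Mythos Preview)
Your proposal is correct and follows essentially the same approach as the paper, which simply states that the result follows from Theorem~\ref{the-DDT} and Theorem~\ref{the-FBDT-1} without spelling out the enumeration. Your added detail---splitting the $n=2m$ value-$2$ count according to the parity of $m$ to reconcile $I_2^{(1)}$ and $I_2^{(2)}$ with the DDT count in Proposition~\ref{the-FBCT-0}(iii)---is exactly the bookkeeping the paper leaves implicit, and your subtraction arguments for the value-$0$ strata are sound.
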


Consequently, the  Feistel boomerang differential uniformity of the power mapping $F(x)=x^{2^{m+1}-1}$ over $\gf_{2^n}$ can be determined as below.

\begin{corollary}\label{the-spec-FBDT-2}
Let $m$ be a positive integer and  $F(x)=x^{2^{m+1}-1}$ be a power mapping over $\gf_{2^n}$. Then the Feistel boomerang differential uniformity of $F(x)$ satisfies
$$
\beta_d(F)=
\begin{cases}
    2^m, &  {\rm if}\,\,  n=2m; \\
    2, &  {\rm if}\,\, n=2m+1.\\
\end{cases}
$$
\end{corollary}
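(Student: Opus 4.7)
The plan is simply to read the corollary off Theorem \ref{the-FBDT-1}, after excluding the trivial triple with $a=c=0$ that Definition \ref{definition-FBDT} removes from the maximization defining $\beta_d(F)$. By Theorem \ref{the-FBDT-1}, the value $\mathrm{FBDT}_F(a,c,b)=2^n$ is achieved only on $I_{2^n}=\{(a,c,b):a=c=0\}$, so this contribution drops out of the max.

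For $n=2m+1$, part (i) leaves $\{0,2\}$ as the only possible values on the relevant domain, hence $\beta_d(F)\le 2$. I would then verify that the set $I_2$ is non-empty: since $\gcd(2^m+1,2^{2m+1}-1)=1$, the map $t\mapsto 1/t^{2^m+1}$ is a bijection of $\gf_{2^n}^*$, and balancedness of the trace guarantees some $t\in \gf_{2^n}^*$ with ${\rm Tr}_{1}^n(1/t^{2^m+1})=1$. Picking any nonzero $a\in \gf_{2^n}$ and setting $c=ta^{2^{m+1}-1}$ with $b=0$ produces a triple in $I_2$, giving $\beta_d(F)=2$.

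For $n=2m$, part (ii) leaves $\{0,2,2^m\}$ as the non-trivial values, so $\beta_d(F)\le 2^m$. Equality holds because $I_{2^m}^{(1)}$ is visibly non-empty: take any $a\in \gf_{2^n}^*$, $b=0$ and $c=a^{2^{m+1}-1}$, so that $t=1$, placing $(a,c,b)$ in $I_{2^m}^{(1)}$ and witnessing $\mathrm{FBDT}_F(a,c,b)=2^m$.

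There is essentially no obstacle: the corollary is a direct consequence of Theorem \ref{the-FBDT-1}, the only substantive step being the short non-emptiness verifications for the index sets attaining the maximum. These checks are immediate and require no new computation with the defining equations of the FBDT.
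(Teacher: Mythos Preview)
Your proposal is correct and follows the same approach as the paper, which simply states the corollary as an immediate consequence of Theorem~\ref{the-FBDT-1} (and its companion Proposition~\ref{cou-FBDT-1}) without further argument. Your explicit non-emptiness checks are a welcome addition; note that for $n=2m+1$ you could have avoided the gcd computation by simply taking $t=1$, since then ${\rm Tr}_1^n(1)=n\bmod 2=1$.
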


\section{The Feistel Boomerang Extended Table of $x^{2^{m+1}-1}$}\label{FBET-result1}

This section determines the entries of the FBET of the power mapping $F(x)=x^{2^{m+1}-1}$ over $\gf_{2^n}$, where $n=2m+1$ or $n=2m$. The fourth main result is given by the following theorem, which is derived through the computation of the number of solutions over $\mathbb{F}_{2^n}$ of the system of equations:
$$
\left \{\begin{array}{ll}
x^{2^{m+1}-1}+(x+a)^{2^{m+1}-1}+(x+b)^{2^{m+1}-1}+(x+a+b)^{2^{m+1}-1}=0,\\
x^{2^{m+1}-1}+(x+a)^{2^{m+1}-1}=c,\\
(x+a)^{2^{m+1}-1}+(x+a+b)^{2^{m+1}-1}=d.
\end{array}\right.
$$

\begin{theorem}\label{the-FBET-1}
Let $F(x)=x^{2^{m+1}-1}$ be a power mapping over $\gf_{2^n}$. Then
\begin{itemize}
\item [\rm (i)] if $n=2m+1$, then
$$
{\rm FBET}_{F}(a,c,b,d)=
\begin{cases}
    2^n, &  {\rm if}\,\,  (a,c,b,d)\in I_{2^n}; \\
     2, &  {\rm if}\,\, (a,c,b,d)\in I_{2}^{(1)}\cup I_{2}^{(2)}\cup I_{2}^{(3)};\\
     0, &  {\rm ortherwise},
\end{cases}
$$
where $t_1=\frac{c}{a^{2^{m+1}-1}}$, $t_2=\frac{d}{b^{2^{m+1}-1}}$, and
\begin{equation*}
\begin{aligned}
&I_{2^n}=\{(a,c,b,d)\in \gf_{2^n}^4: a=c=b=d=0\},\\
&I_{2}^{(1)}=\{(a,c,b,d)\in \gf_{2^n}^4: a=c=0,\, bd\ne 0\,\, {\rm and}\,\, {\rm Tr}_{1}^n\Big(\frac{1}{t_2^{2^m+1}}\Big)=1\},\\
&I_{2}^{(2)}=\{(a,c,b,d)\in \gf_{2^n}^4: a\ne 0,\, c\ne 0,\, b=d=0\,\, {\rm and}\,\, {\rm Tr}_{1}^n\Big(\frac{1}{t_1^{2^m+1}}\Big)=1\},\\
&I_{2}^{(3)}=\{(a,c,b,d)\in \gf_{2^n}^4: a=b\ne 0,\, c=d\ne 0\,\, {\rm and}\,\, {\rm Tr}_{1}^n\Big(\frac{1}{t_1^{2^m+1}}\Big)=1\}.\\
\end{aligned}
\end{equation*}

\item [\rm (ii)] if $n=2m$, then
$${\rm FBET}_{F}(a,c,b,d)=
\begin{cases}
    2^n, &  {\rm if}\,\,  (a,c,b,d)\in I_{2^n}; \\
    2^m, &  {\rm if}\,\, (a,c,b,d)\in I_{2^m}^{(1)}\cup I_{2^m}^{(2)}\cup I_{2^m}^{(3)}\cup I_{2^m}^{(4)}; \\
     2, &  {\rm if}\,\, (a,c,b,d)\in I_{2}^{(1)}\cup I_{2}^{(2)}\cup I_{2}^{(3)}\cup I_{2}^{(4)}\cup I_{2}^{(5)}\cup I_{2}^{(6)}; \\
     0, &  {\rm ortherwise},
\end{cases}
$$
where $t_1=\frac{c}{a^{2^{m+1}-1}}$, $t_2=\frac{d}{b^{2^{m+1}-1}}$ and
\begin{align*}
&I_{2^n}=\{(a,c,b,d)\in \gf_{2^n}^4: a=c=b=d=0\},\\
&I_{2^m}^{(1)}=\{(a,c,b,d)\in \gf_{2^n}^4: a=c=0,\, bd\ne 0\,\, {\rm and}\,\, t_2=1\},\\
&I_{2^m}^{(2)}=\{(a,c,b,d)\in \gf_{2^n}^4: a\ne 0,\,c\ne 0,\, b=d=0\,\, {\rm and}\,\, t_1=1\},\\
&I_{2^m}^{(3)}=\{(a,c,b,d)\in \gf_{2^n}^4: a=b\ne 0,\,c=d\ne 0\,\,  {\rm and}\,\, t_1=1\},\\
&I_{2^m}^{(4)}=\{(a,c,b,d)\in \gf_{2^n}^4: a\ne 0,\, b\notin \{0,a\},\, \frac{a}{b}\in \gf_{2^m}\setminus \{0,1\},\, t_1=1\,\, {\rm and}\,\, t_2=1 \},\\
&I_{2}^{(1)}=\{(a,c,b,d)\in \gf_{2^n}^4: a=c=0,\,  bd\ne 0,\,  t_2^{2^m+1}\ne 1\,\, {\rm and}\,\, {\rm Tr}_{1}^m\Big(\frac{t_2^{2^m}+t_2+1}{t_2^{2^{m+1}+2}+1}\Big)=1\},\\
&I_{2}^{(2)}=\{(a,c,b,d)\in \gf_{2^n}^4: a=c=0,\,  b\ne 0,\,  d=0\,\, {\rm and}\,\, m\,\, {\rm is\,\, odd}\},\\
&I_{2}^{(3)}=\{(a,c,b,d)\in \gf_{2^n}^4: a\ne 0,\, c\ne 0,\, b=d=0,\, t_1^{2^m+1}\ne 1\,\, {\rm and}\,\, {\rm Tr}_{1}^m\Big(\frac{t_1^{2^m}+t_1+1}{t_1^{2^{m+1}+2}+1}\Big)=1\},\\
&I_{2}^{(4)}=\{(a,c,b,d)\in \gf_{2^n}^4: a\ne 0,\, c=0,\, b=d=0\,\,  {\rm and}\,\, m\,\, {\rm is\,\, odd}\},\\
&I_{2}^{(5)}=\{(a,c,b,d)\in \gf_{2^n}^4: a=b\ne 0,\, c=d\ne 0,\,  t_1^{2^m+1}\ne 1\,\, {\rm and}\,\, {\rm Tr}_{1}^m\Big(\frac{t_1^{2^m}+t_1+1}{t_1^{2^{m+1}+2}+1}\Big)=1\}.\\
&I_{2}^{(6)}=\{(a,c,b,d)\in \gf_{2^n}^4: a=b\ne 0,\, c=d=0\,\,  {\rm and}\,\, m\,\, {\rm is\,\, odd}\}.
\end{align*}
\end{itemize}
\end{theorem}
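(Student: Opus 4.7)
The plan is to reduce the FBET computation to the already-established formulas for the DDT (Theorem \ref{the-DDT}) and the FBCT (Theorem \ref{the-FBCT-1}), through a case analysis on whether any of $a$, $b$, $a+b$ vanishes. The key observation is that the three defining equations of ${\rm FBET}_F(a,c,b,d)$ rigidly determine all four values $F(x), F(x+a), F(x+b), F(x+a+b)$ from $(c,d)$ as soon as the FBCT relation holds: adding the second and third equations and comparing with the first forces $F(x)+F(x+b)=d$ as a derived identity. Hence ${\rm FBET}_F(a,c,b,d)$ counts exactly those $x$ which satisfy the FBCT equation together with the $a$-derivative $F(x)+F(x+a)=c$ and the $b$-derivative $F(x)+F(x+b)=d$.

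First I would handle the degenerate boundary cases where $a$, $b$, or $a+b$ vanishes. If $a=0$, the second equation forces $c=0$, the FBCT equation becomes vacuous, and the system reduces to $F(x)+F(x+b)=d$, so ${\rm FBET}_F(0,0,b,d)={\rm DDT}_F(b,d)$. Symmetrically, if $b=0$ then $d=0$ is forced and ${\rm FBET}_F(a,c,0,0)={\rm DDT}_F(a,c)$. If $a=b\ne 0$, the FBCT equation is automatic and the third equation coincides with the second, so $c=d$ is forced and ${\rm FBET}_F(a,c,a,c)={\rm DDT}_F(a,c)$. Plugging Theorem \ref{the-DDT} into these three scenarios (with $t_1=c/a^{2^{m+1}-1}$ and $t_2=d/b^{2^{m+1}-1}$) immediately produces the values $2^n$, $2^m$, $2$, or $0$ in the combinations corresponding to every $I$-stratum in the statement except $I_{2^m}^{(4)}$.

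The core case is $ab(a+b)\ne 0$. For $n=2m+1$, Theorem \ref{the-FBCT-1} gives ${\rm FBCT}_F(a,b)=0$, so the first equation alone has no solution and ${\rm FBET}_F(a,c,b,d)=0$, contributing nothing beyond the degenerate strata. For $n=2m$, Theorem \ref{the-FBCT-1} provides solutions precisely when $a/b\in\gf_{2^m}\setminus\{0,1\}$, and its proof shows that the $2^m$ such solutions are exactly $x=by$ with $y\in\gf_{2^m}$. Writing $a=b\alpha$ with $\alpha\in\gf_{2^m}\setminus\{0,1\}$ and using that $y^{2^{m+1}}=y^2$ for $y\in\gf_{2^m}$, so that $F(by)=b^{2^{m+1}-1}y$ by multiplicativity of $F$, I obtain $F(x)+F(x+a)=b^{2^{m+1}-1}(y+(y+\alpha))=b^{2^{m+1}-1}\alpha=a^{2^{m+1}-1}$ independently of $y$, and analogously $F(x+a)+F(x+a+b)=b^{2^{m+1}-1}$. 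Hence all $2^m$ FBCT solutions automatically satisfy the two derivative constraints precisely when $t_1=t_2=1$, and none do otherwise, which matches $I_{2^m}^{(4)}$ together with its complementary zero cases.

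The main obstacle is not computational but one of bookkeeping: every quadruple $(a,c,b,d)\in\gf_{2^n}^4$ must be shown to fall into exactly one of the $I$-sets listed in the theorem, with the correct value attached. Care is also needed to correctly fold in the $c=0$ and $d=0$ sub-branches of Theorem \ref{the-DDT}, whose behaviour depends on the parity of $m$ in the $n=2m$ case (giving DDT value $2$ when $m$ is odd and $0$ otherwise), and to include the boundary $y=0$ in the parametrisation $x=by$ by the convention $F(0)=0$. Once the degenerate cases are catalogued and the single generic calculation above is done, the theorem is assembled by gathering all strata; all the arithmetic heavy lifting already sits inside Theorems \ref{the-DDT} and \ref{the-FBCT-1}.
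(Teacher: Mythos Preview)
Your proposal is correct and follows the same overall architecture as the paper: split according to whether $ab(a+b)=0$, reduce every degenerate branch to ${\rm DDT}_F$ via Theorem~\ref{the-DDT}, and invoke Theorem~\ref{the-FBCT-1} for the generic branch. Your observation that, under the first equation, the third defining condition is equivalent to $F(x)+F(x+b)=d$ is exactly the rewriting the paper performs in its equation~(\ref{FBET-n=2m-1-2}).

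The one place where your argument genuinely diverges is the handling of the core case $ab(a+b)\ne 0$ for $n=2m$. The paper routes this through Theorem~\ref{the-FBDT-1} (the FBDT) to control the first two equations simultaneously, then argues via an auxiliary system that the third equation is automatically satisfied on that solution set. You instead pull out the explicit parametrisation $x=by$, $y\in\gf_{2^m}$, of the FBCT solution set from the proof of Theorem~\ref{the-FBCT-1}, and compute $F(by)=b^{2^{m+1}-1}y$ directly (using $y^{2^{m+1}-1}=y$ on $\gf_{2^m}$) to see that both derivative values are constant on that set, equal to $a^{2^{m+1}-1}$ and $b^{2^{m+1}-1}$ respectively. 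This is cleaner: it bypasses the FBDT entirely and makes the appearance of the conditions $t_1=t_2=1$ transparent. The paper's route, on the other hand, treats the FBET as a refinement of the FBDT, which fits its expository sequence DDT\,$\to$\,FBCT\,$\to$\,FBDT\,$\to$\,FBET. Both approaches are short once Theorems~\ref{the-DDT} and~\ref{the-FBCT-1} are in hand.
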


\begin{proof}
Let $a, c, b, d\in \gf_{2^n}$, we need to count the number of the solutions of
\begin{equation}\label{FBET-n=2m-1-1}
\arraycolsep=1.2pt\def\arraystretch{1.4}
\left \{\begin{array}{ll}
F(x)+F(x+a)+F(x+b)+F(x+a+b)=0,\\
F(x)+F(x+a)=c,\\
F(x+a)+F(x+a+b)=d.
\end{array}\right.
\end{equation}

We first consider the case  $n=2m+1$ as follows:

{\textbf{Case 1:}} It can be seen that ${\rm FBET}(0,0,0,0)=2^n$.

{\textbf{Case 2:}} Assume that $a=0$. The second equation of (\ref{FBET-n=2m-1-1}) has $0$ or $2^n$ solutions and it has $2^n$ solutions iff $c=0$.
Hence, ${\rm FBET}_{F}(0,0,b,d)={\rm DDT}_{F}(b,d)$.

{\textbf{Case 3:}} Assume that $a\ne 0$, $b=0$. The third equation of (\ref{FBET-n=2m-1-1}) has $0$ or $2^n$ solutions and it has $2^n$ solutions iff $d=0$. Hence,  we have ${\rm FBET}_{F}(a,c,0,0)={\rm DDT}_{F}(a,c)$ if $a\ne 0$.

{\textbf{Case 4:}} Assume that $a\ne 0$, $b=a$. Then (\ref{FBET-n=2m-1-1}) can be reduced to
\begin{equation*}
\arraycolsep=1.2pt\def\arraystretch{1.4}
\left \{\begin{array}{ll}
F(x)+F(x+a)=c,\\
F(x)+F(x+a)=d.
\end{array}\right.
\end{equation*}
Hence, when $a\ne 0$, $b=a$ and $c=d$, we have ${\rm FBET}_{F}(a,c,b,d)={\rm DDT}_{F}(a,c)$, otherwise ${\rm FBET}_{F}(a,c,b,d)=0$.

{\textbf{Case 5:}} Assume that $a\ne 0$, $b\notin \{0, a\}$. According to Theorem \ref{the-FBCT-1}, the first equation of (\ref{FBET-n=2m-1-1}) has no solution. Hence, (\ref{FBET-n=2m-1-1}) has no solution when $a\ne 0$, $b\notin \{0, a\}$. This completes the proof of the case of $n=2m+1$.

For the case of $n=2m$, the proof can be similarly obtained except for the discussion in Case 5. Observe  that (\ref{FBET-n=2m-1-1}) can be rewritten as
\begin{equation}\label{FBET-n=2m-1-2}
\arraycolsep=1.2pt\def\arraystretch{1.4}
\left \{\begin{array}{ll}
F(x)+F(x+a)+F(x+b)+F(x+a+b)=0,\\
F(x)+F(x+a)=c,\\
F(x)+F(x+b)=d.
\end{array}\right.
\end{equation}

By Theorem \ref{the-FBDT-1}, when $a\ne 0$, $b\notin \{0, a\}$, $t_1=1$, and $\frac{a}{b}\in \gf_{2^m}\setminus \{0,1\}$, the first and the second equations of (\ref{FBET-n=2m-1-2}) has $2^m$ solutions, including $x=0, a, b, a+b$. From Theorem \ref{the-DDT}, when $b\ne 0$, $d\ne 0$ and $t_2=1$, we obtain the third equation of (\ref{FBET-n=2m-1-2}) has $2^m$ solutions which include $x=0, b$. Then we obtain that  (\ref{FBET-n=2m-1-2}) has solutions when $a\ne 0$, $b\notin \{0, a\}$, $\frac{a}{b}\in \gf_{2^m}\setminus \{0,1\}$, $t_1=1$ and $t_2=1$. Now we assume that $x\ne 0, a, b, a+b$.
When $a\ne 0$, $b\notin \{0, a\}$, $\frac{a}{b}\in \gf_{2^m}\setminus \{0,1\}$, $t_1=1$ and $t_2=1$, multiplying $x(x+a)(x+b)(x+a+b)$, $x(x+a)$ and $x(x+b)$ on both sides of the first, the second and the third equations of (\ref{FBET-n=2m-1-2}), respectively, one then obtains
\begin{equation}\label{FBET-n=2m-1-3}
\arraycolsep=1.2pt\def\arraystretch{1.4}
\left \{\begin{array}{ll}
a^2bx^{2^{m+1}}+ab^2x^{2^{m+1}}+a^{2^{m+1}}bx^2+ab^{2^{m+1}}x^2=0,\\
a^2x^{2^{m+1}}+a^{2^{m+1}}x^2=0,\\
b^2x^{2^{m+1}}+b^{2^{m+1}}x^2=0.
\end{array}\right.
\end{equation}
According to the discussion of Case 3 in Theorem \ref{the-FBDT-1} when $n=2m$, it can be easily seen that the solutions of the first and the second equations of  (\ref{FBET-n=2m-1-3}) must be the solutions of the third equation of  (\ref{FBET-n=2m-1-3}). Hence, (\ref{FBET-n=2m-1-3}) has $2^m$ solutions when $a\ne 0$, $b\notin \{0, a\}$, $\frac{a}{b}\in \gf_{2^m}\setminus \{0,1\}$, $t_1=1$ and $t_2=1$.

On the other hand, by Theorem \ref{the-FBCT-1}, if $a\ne 0$, $b\notin \{0, a\}$ and $\frac{a}{b}\in \gf_{2^n}\setminus \gf_{2^m}$. then the first equation of  (\ref{FBET-n=2m-1-2}) has no solution. Hence, (\ref{FBET-n=2m-1-2}) has no solution when $a\ne 0$, $b\notin \{0, a\}$ and $\frac{a}{b}\in \gf_{2^n}\setminus \gf_{2^m}$.
According to the DDT of the power mapping $F(x)=x^{2^{m+1}-1}$ in Theorem \ref{the-DDT}, the desired result can be derived.
This completes the proof.
\end{proof}

According to Theorem \ref{the-FBET-1} and Theorem \ref{the-DDT}, we can obtain the following result.

\begin{proposition}\label{cou-FBET-1}
Let $m$ be a positive integer and $F(x)=x^{2^{m+1}-1}$ be a power mapping over $\gf_{2^n}$.
\begin{itemize}
\item [\rm (i)] If $n=2m+1$, then
\begin{equation*}
\begin{aligned}
&|\{(a,c,b,d)\in \gf_{2^n}^4:{\rm FBET}_{F}(a,c,b,d)=2^n\}|=1;\\
&|\{(a,c,b,d)\in \gf_{2^n}^4:{\rm FBET}_{F}(a,c,b,d)=2\}|=3\times 2^{n-1}(2^n-1);\\
&|\{(a,c,b,d)\in \gf_{2^n}^4:{\rm FBET}_{F}(a,c,b,d)=0\}|=2^{4n}-3\times (2^{2n-1}-2^{n-1})-1;
\end{aligned}
\end{equation*}
\item [\rm (i)] If $n=2m$, then
\begin{equation*}
\begin{aligned}
&|\{(a,c,b,d)\in \gf_{2^n}^4:{\rm FBET}_{F}(a,c,b,d)=2^n\}|=1;\\
&|\{(a,c,b,d)\in \gf_{2^n}^4:{\rm FBET}_{F}(a,c,b,d)=2^m\}|=(2^m+1)(2^n-1);\\
&|\{(a,c,b.d)\in \gf_{2^n}^4:{\rm FBET}_{F}(a,c,b,d)=2\}|=3\times (2^{n-1}-2^m)(2^n-1);\\
&|\{(a,c,b,d)\in \gf_{2^n}^4:{\rm FBET}_{F}(a,c,b,d)=0\}|=2^{4n}-(3\cdot 2^{n-1}-2^{m+1}+1)(2^n-1)-1.
\end{aligned}
\end{equation*}
\end{itemize}
\end{proposition}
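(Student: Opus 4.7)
The plan is to deduce the proposition directly from Theorem~\ref{the-FBET-1}, which already partitions $\mathbb{F}_{2^n}^4$ into level sets of $\mathrm{FBET}_F$ indexed by $I_{2^n}$, the sets $I_{2^m}^{(i)}$ (when $n = 2m$), and the sets $I_2^{(i)}$. The task is therefore purely enumerative: I would compute the cardinality of each index set and obtain $|\{\mathrm{FBET}_F = 0\}|$ by subtraction from $2^{4n}$.

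First I would dispose of the trivial contribution $|I_{2^n}| = 1$, accounting for the $\mathrm{FBET}_F = 2^n$ column. Next, for the $n = 2m$ statement, I would tackle the $\mathrm{FBET}_F = 2^m$ column by enumerating the four sets $I_{2^m}^{(i)}$. In the first three, three of the four coordinates collapse either to zero or to a single free nonzero parameter while the remaining coordinate is forced by a $t_i = 1$ relation such as $d = b^{2^{m+1}-1}$; each such set therefore contains $2^n - 1$ elements. For $I_{2^m}^{(4)}$, the constraint $a/b \in \mathbb{F}_{2^m}\setminus\{0,1\}$ yields $2^m - 2$ admissible values of $b$ per nonzero $a$, with $c$ and $d$ then determined by $t_1 = t_2 = 1$; this contributes $(2^n - 1)(2^m - 2)$. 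Summing produces $(2^m + 1)(2^n - 1)$, matching the proposition.

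For the $\mathrm{FBET}_F = 2$ column I would group the six sets $I_2^{(i)}$ into the three natural pairs $\{I_2^{(1)}, I_2^{(2)}\}$, $\{I_2^{(3)}, I_2^{(4)}\}$, $\{I_2^{(5)}, I_2^{(6)}\}$, each corresponding to one of the three degenerate FBET configurations ($a = 0$, $b = 0$, or $a = b$) under which the FBET system collapses to a single DDT equation $F(y) + F(y+u) = v$. Each pair then parametrizes the set of $(u, v) \in \mathbb{F}_{2^n}^2$ with $u \ne 0$ and $\mathrm{DDT}_F(u,v) = 2$, so its cardinality is supplied directly by Proposition~\ref{the-FBCT-0}; the three pairs contribute equally, producing the factor~$3$. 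The case $n = 2m + 1$ is handled by the same pairing strategy together with part~(ii) of Proposition~\ref{the-FBCT-0}, giving $3 \cdot 2^{n-1}(2^n - 1)$.

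The main technical hurdle is the careful treatment of the boundary cases $b = 0$, $d = 0$, or $c = 0$ that populate the sets $I_2^{(2)}, I_2^{(4)}, I_2^{(6)}$: their contribution toggles with the parity of $m$ through part~(iii) of Theorem~\ref{the-DDT}. This parity asymmetry must cancel cleanly inside each union $I_2^{(2i-1)} \cup I_2^{(2i)}$ --- indeed, that is exactly why the six sets are organised into three pairs --- and verifying this cancellation together with the pairwise disjointness of all the $I$-sets is the place where genuine care is required. Once these cardinalities are pinned down, the $\mathrm{FBET}_F = 0$ count is obtained as $2^{4n}$ minus the rest, completing the proposition.
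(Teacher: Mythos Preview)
Your plan coincides with the paper's own argument, which consists of the single remark that the counts follow from Theorem~\ref{the-FBET-1} together with Theorem~\ref{the-DDT} (equivalently Proposition~\ref{the-FBCT-0}); you have simply written that deduction out, and your pairing $I_2^{(2i-1)}\cup I_2^{(2i)}\leftrightarrow\{(u,v):u\ne 0,\ {\rm DDT}_F(u,v)=2\}$ is exactly the right organising principle. The disjointness of the $I$-sets is immediate from the case structure of Theorem~\ref{the-FBET-1}, and no genuine ``parity cancellation'' is needed: each pair already equals the ${\rm DDT}=2$ locus regardless of the parity of $m$, so its size is $\omega_2$ from Proposition~\ref{the-FBCT-0} directly.

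One warning: if you actually carry the enumeration through for $n=2m$ you will obtain
\[
|\{{\rm FBET}_F=2\}|=3\,\omega_2=3\,(2^{n-1}-2^{m-1})(2^n-1),
\]
not the printed $3(2^{n-1}-2^m)(2^n-1)$. The printed value fails the identity $\sum_{a,c,b,d}{\rm FBET}_F(a,c,b,d)=\sum_{a,b}{\rm FBCT}_F(a,b)$ (valid because for each $(x,a,b)$ satisfying the FBCT equation the parameters $c,d$ are uniquely determined); for example at $n=4$, $m=2$ the right side equals $856$ by Proposition~\ref{the-spec-FBCT-1}, which your count reproduces while the printed one gives $676$. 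So your method is sound, and the discrepancy you will meet reflects a misprint in the stated $\{{\rm FBET}_F=2\}$ and $\{{\rm FBET}_F=0\}$ lines rather than any gap in your reasoning.
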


Consequently, the following result explicitly gives the Feistel boomerang extended uniformity of the power mapping $F(x)=x^{2^{m+1}-1}$ over $\gf_{2^n}$.

\begin{corollary}\label{the-spec-FBET-2}
Let $F(x)=x^{2^{m+1}-1}$ be a power mapping over $\gf_{2^n}$. Then the Feistel boomerang extended uniformity of $F(x)$ satisfies
$$
\beta_e(F)=
\begin{cases}
    2^m, &  {\rm if}\,\,  n=2m;\\
    2, &  {\rm if}\,\, n=2m+1.
\end{cases}
$$
\end{corollary}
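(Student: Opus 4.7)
The plan is to deduce the corollary as an immediate consequence of Theorem \ref{the-FBET-1}. By Definition \ref{definition-FBET}, the Feistel boomerang extended uniformity $\beta_e(F)$ is the maximum of ${\rm FBET}_F(a,c,b,d)$ taken over all quadruples in $\gf_{2^n}^4 \setminus \{(0,0,0,0)\}$. Theorem \ref{the-FBET-1} enumerates every value attained by the FBET together with the stratum on which that value is taken. The value $2^n$ occurs only on $I_{2^n} = \{(0,0,0,0)\}$, which is precisely the point excluded from the maximization; so in both cases the maximum must come from the strata immediately below $2^n$.

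For $n = 2m+1$, the only remaining nonzero value appearing in Theorem \ref{the-FBET-1} is $2$, so at once $\beta_e(F) \leq 2$. To confirm equality, I would exhibit a single element of one of the sets $I_{2}^{(1)}$, $I_{2}^{(2)}$, $I_{2}^{(3)}$. For instance, to populate $I_{2}^{(2)}$, I pick any $a \in \gf_{2^n}^*$; since $\gcd(2^m+1,\,2^{2m+1}-1) = 1$, the map $t \mapsto t^{2^m+1}$ permutes $\gf_{2^n}^*$, so $1/t^{2^m+1}$ ranges over all of $\gf_{2^n}^*$, and by balancedness of ${\rm Tr}_1^n$ there exists $t_1$ with ${\rm Tr}_1^n(1/t_1^{2^m+1}) = 1$. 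Setting $c := t_1 \, a^{2^{m+1}-1}$ gives $(a,c,0,0) \in I_{2}^{(2)}$, hence $\beta_e(F) = 2$.

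For $n = 2m$, Theorem \ref{the-FBET-1} shows the only values other than $2^n$ are $2^m$, $2$, $0$, so $\beta_e(F) \leq 2^m$. To attain this bound, I take any $b \in \gf_{2^n}^*$ and set $d := b^{2^{m+1}-1}$, so that $t_2 = 1$ and $bd \neq 0$; then $(0,0,b,d) \in I_{2^m}^{(1)}$ and ${\rm FBET}_F(0,0,b,d) = 2^m$. This gives $\beta_e(F) = 2^m$.

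In short, there is no genuine obstacle: the corollary is a direct reading of Theorem \ref{the-FBET-1}, and the only thing to verify is that the relevant strata $I_{2^m}^{(1)}$ (for even $n$) and $I_{2}^{(2)}$ (for odd $n$) are nonempty. In fact, Proposition \ref{cou-FBET-1} records that these sets have cardinalities $(2^m+1)(2^n-1)$ and $3 \cdot 2^{n-1}(2^n-1)$ respectively, both strictly positive, so one may alternatively cite Proposition \ref{cou-FBET-1} to bypass the explicit constructions above.
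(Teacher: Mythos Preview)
Your proposal is correct and follows the same approach as the paper: the corollary is stated there as an immediate consequence of Theorem \ref{the-FBET-1} (and Proposition \ref{cou-FBET-1}) with no further argument given. Your additional verification that the relevant strata are nonempty is more explicit than what the paper provides, but entirely in the same spirit.
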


\section{Conclusion}\label{con-remarks}

This paper studied differential and Feistel boomerang differential uniformities for vectorial Boolean functions $F$  over binary finite fields.
Specifically,  we presented the DDT, FBCT, FBDT and FBET, respectively, of the power function $F(x)=x^{2^{m+1}-1}$  for positive integer values of  $m$. Our achievements are obtained by solving specific equations over $\gf_{2^n}$ (where $n=2m$ or $n=2m+1$) and by developing techniques to calculate the exact value of each entry on each table and determining the precise number of elements with a given entry.   From the theoretical point of view, our study pushes further former investigations on differential and Feistel boomerang differential uniformities by completing the results presented in the literature by considering new cases of power functions $F$. From a cryptographic point of view, when considering block ciphers (including the Feistel cipher) involving $F$, our in-depth analysis helps decide whether functions $F$  can be regarded as good candidates to be considered or not against differential attacks and Feistel differential and boomerang attacks, respectively.

\section*{Acknowledgments}

This work was supported by the National Key Research and Development Program of China (No. 2021YFA1000600), the National Natural Science Foundation of China (No. 62072162), the Natural Science Foundation of Hubei Province of China (No. 2021CFA079) and the Knowledge Innovation Program of Wuhan-Basic Research (No. 2022010801010319).

\end{document}